\documentclass[10pt, compsocconf, fleqn]{IEEEtran}
\ifCLASSOPTIONcompsoc
\usepackage[nocompress]{cite}
\else
\usepackage{cite}
\fi
\usepackage{bbm}
\usepackage{graphicx}
\usepackage{cite}
\usepackage{subcaption}
\usepackage{algorithm}
\usepackage{amsmath,amsthm,amssymb}
\usepackage{multirow}
\usepackage{epstopdf}
\usepackage{color}
\usepackage{epsfig}
\usepackage{xfrac}
\usepackage{filecontents}
\usepackage{csquotes}
\usepackage{footnote}
\makesavenoteenv{tabular}
\makesavenoteenv{table}
\usepackage{url}
\usepackage{balance}
\hyphenation{op-tical net-works semi-conduc-tor}
\newtheorem{theorem}{Theorem}

\newtheorem{remark}[theorem]{Remark}
\theoremstyle{definition}
\newtheorem{definition}{Definition}[section]
\usepackage{pifont}
\usepackage{xfrac}
\usepackage{pbox}
\usepackage{epstopdf}
\usepackage[flushleft]{threeparttable}

\makeatletter
\def\Let@{\def\\{\notag\math@cr}}
\makeatother

\ifCLASSINFOpdf
\else
\fi

\hyphenation{op-tical net-works semi-conduc-tor}

\begin{document}
	%
	\title{Securing Organization's Data: A Role-Based Authorized Keyword Search Scheme with Efficient Decryption}
	\author{Nazatul Haque~Sultan,~\IEEEmembership{Student Member,~IEEE,}
		Maryline~Laurent,~\IEEEmembership{Member,~IEEE,}
		and Vijay~Varadharajan,~\IEEEmembership{Senior Member,~IEEE}
		\IEEEcompsocitemizethanks{\IEEEcompsocthanksitem N. H. Sultan and M. Laurent are with the RST Department, T\'el\'ecom SudParis, Institut Polytechnique de Paris, France.\protect\\
			E-mail: nazatulhaque.sultan@gmail.com; maryline.laurent@telecom-sudparis.eu
			\IEEEcompsocthanksitem V. Varadharajan is with the Faculty of Engineering and Built Environment, Global Innovation Chair Professor at The University of Newcastle, Callaghan, Australia.\protect\\
			E-mail: Vijay.Varadharajan@newcastle.edu.au
		}
	}

	\maketitle

	%
	
	\IEEEdisplaynontitleabstractindextext
	
	\IEEEpeerreviewmaketitle

	\begin{abstract}
		For better data availability and accessibility while ensuring data secrecy, organizations often tend to outsource their encrypted data to the cloud storage servers, thus bringing the challenge of keyword search over encrypted data. In this paper, we propose a novel authorized keyword search scheme using Role-Based Encryption (RBE) technique in a cloud environment. The contributions of this paper are multi-fold. First, it presents a keyword search scheme which enables only the authorized users, having proper assigned roles, to delegate keyword-based data search capabilities over encrypted data to the cloud providers without disclosing any sensitive information. Second, it supports a multi-organization cloud environment, where the users can be associated with more than one organization. Third, the proposed scheme provides efficient decryption, conjunctive keyword search and revocation mechanisms. Fourth, the proposed scheme outsources expensive cryptographic operations in decryption to the cloud in a secure manner. Fifth, we have provided a formal security analysis to prove that the proposed scheme is semantically secure against Chosen Plaintext and Chosen Keyword Attacks. Finally, our performance analysis shows that the proposed scheme is suitable for practical applications.
	\end{abstract}
	\begin{IEEEkeywords}
		Role-based encryption, role-based access control, searchable encryption, keyword search, outsourced decryption, provable security, cloud data privacy.
	\end{IEEEkeywords}
	
	\section{Introduction}
	\label{introduction}
	With the ever-increasing amount of digital information, individuals and organizations are now storing/outsourcing their data in the cloud to make use of features such as better accessibility, high availability, reduction of maintenance and initial investment costs \cite{Ferrer2019}. However, with sensitive data stored in the cloud (e.g. see McAfee report \cite{McAfee}) and legal concerns (such as compliance to the European General Data Protection Regulation - GDPR\footnote{\url{https://ec.europa.eu/commission/priorities/justice-and-fundamental-rights/data-protection/}}), security and privacy have become major issues in cloud data storage\footnote{In this paper, cloud represents the public cloud that provides storage facilities to the general public (i.e., individuals and organizations). In general, the public cloud is maintained by a third-party entity referred to as \emph{Cloud Service Provider} \cite{NIST}.}. To preserve privacy and confidentiality of outsourced data in the cloud, a preferred technique that is often used is \emph{encryption-before-outsourcing}. The encryption-before-outsourcing technique enables the data owners (i.e. entities owning the data) to outsource their sensitive data in the cloud in an encrypted form. As such, no entity including the cloud service provider can access the sensitive plaintext data without having access to proper decryption key. This, however, restricts data retrieval/search over encrypted data \cite{Bosch2014}. A trivial solution is to download the whole encrypted database, and then perform the search operation locally after decryption. It is clear that this is not practical. An alternative approach is to allow the service provider to decrypt all the encrypted data so that it can perform search operation over the plaintext data. However, this violates data privacy. 
	\par 
	
	Searchable Encryption (SE) has gained a considerable amount of interest from the research community to address the issue of searching over encrypted data \cite{HAN201666}. In SE, users delegate data search capabilities for some keywords over the encrypted data to a service provider without disclosing any useful information about the searched keywords and the actual content of the encrypted data. This process is also referred to as \emph{keyword search}. Typically, in keyword search, data owners outsource their data in an encrypted form along with an encrypted index of keywords. Whenever a user wants to access data, the user sends the desired keywords in the form of trapdoors to the service provider. In return, the service provider uses the trapdoors to perform search over the encrypted indexes and sends the associated encrypted data, if there is a match between the keywords associated with the trapdoor and encrypted indexes. 
	\par 
	Many works have been done in the area of keyword search, achieving search authorization in a coarse-grained way. That is, the users can search all the keywords using their secret keys \cite{Hu2017}. However, this kind of authorization may disclose sensitive information. For example, Organization A outsources its data files to the cloud so that its employees can easily access them. Assume Organization A is a participant in a consortium with another Organization B and other organizations. Suppose, some files are associated with the keywords ``Organization B" and ``Project X" which are only allowed to be
	accessed by the Managers in the Organization A. In this case, if an adversary can search for the keywords ``Organization B" and ``Project X" and gets all the encrypted files associated with these two keywords. This will eventually reveal, without knowing the actual content, that Organization A and Organization B are collaborating on Project X, which may not be desirable.
	
	\par 
	To address this problem, several authorized keyword search schemes have been proposed for \emph{multi-user settings} using different cryptographic techniques, e.g. Pairing-Based Encryption \cite{Bao2008}, Predicate Encryption \cite{Li2011} and Attribute-Based Encryption \cite{Sun2016, Hu2017}, where multiple users are able to perform keyword search operations based on some access policies. However, none of these techniques efficiently support hierarchies in an organization, where higher level authorities can inherit access rights of their subordinates. As such, all these schemes \cite{Bao2008, Li2011, Sun2016, Hu2017} are not able to reflect efficiently organization's policies and structures\footnote{In an organization, typically employees are organized in a hierarchical way based on their responsibilities and qualification \cite{Zhou2013}.} \cite{Perez2017}.
	
	\par 
	Role-Based Encryption (RBE) \cite{Zhou2011, Zhou2013, Zhu2013} is an emerging cryptographic technique, which combines both properties of the traditional Role-Based Access Control (RBAC) \cite{Sandhu1996} and cryptographic encryption methods, to achieve data access control over encrypted data. In RBE, the data owner encrypts data using a RBAC access policy defined over some roles\footnote{In an organization, roles are typically created based on job functions.}, and any user having proper roles can derive the secret keys for decryption. Unlike the traditional RBAC method, RBE enables the data owners to define and enforce RBAC access policies on the encrypted data itself. This, in turn, reduces the dependency of the data owners on untrusted service provider for defining and enforcing access policies while sharing data with other authorized users. Moreover, similar to the RBAC, in RBE, roles can inherit access permissions from other roles \cite{Zhou2013}. Hence, the roles can be organized in a hierarchical structure. This is one of the main advantages of RBE over other encryption mechanisms such as Attribute-Based Encryption \cite{Bethencourt2007, Goyal2006}, as it can reflect closely a real-world organisation's policies and structure. The inheritance property of the RBE makes it more suitable for large scale organizations such as enterprises with a complex hierarchical structures \cite{Zhou2013}. Therefore, RBE is a more suitable cryptographic technique for designing a keyword search mechanism compared with other cryptographic techniques such as the ABE.
	\par 
	RBE has been used to provide data access control in cloud environments over encrypted data \cite{Zhou2013, Zhu2013, Perez2017}. However, they mainly focus on a single organization cloud environment scenario, where users can have roles only in a single organization and hence can access data associated with only that organization. In many practical scenarios in a cloud environment, a data owner may want to share his/her data with users in several organizations having different roles. For example, a user may work as a researcher and doctor in a clinical research laboratory and hospital respectively. As such, the same user will hold roles in the clinical research laboratory and the hospital. The data owner can specify a RBAC access policy in such a way that only the users having the access privileges for the roles ``Researcher" and ``Doctor" can gain access to the actual content corresponding to the encrypted data.

	\par 
	This paper further investigates the aforementioned research gaps and proposes a novel keyword search scheme using the RBE technique where organizations outsource their data to a public cloud. The proposed scheme supports a multi-organization environment, where users can possess roles from more than one organization. It also enables the data owners to define and enforce RBAC access policies on encrypted data, thereby allowing any a user having authorized roles to perform a keyword search along with the ability to decrypt. 
	The salient features of the proposed scheme are as follows:
	\begin{enumerate}
		\item An authorized keyword search mechanism is proposed using RBE technique so that only the users possessing authorized roles can delegate keyword search capabilities over encrypted data to the public cloud. 
		\item The proposed scheme supports multi-organization cloud environment, where a user can be associated with more than one organization, having one or more roles in different organizations.
		\item Conjunctive keyword search\footnote{In conjunctive keyword search, a user can search for multiple keywords in a single request  \cite{Ferrer2019}.} functionality is supported without any significant overhead in the system.

		\item A user revocation mechanism has been introduced to revoke unintended users.
		\item An outsourced decryption mechanism is combined with the proposed scheme enabling the users to delegate most of the computationally expensive cryptographic operations to the public cloud, thereby reducing the overhead on the user-side.
		\item A formal security analysis of the proposed scheme has been given demonstrating that the scheme is secure against the Chosen Plaintext Attacks and the Chosen Keyword Attacks.
		\item  A performance analysis of the proposed
		scheme has been provided which shows that the proposed scheme is sufficiently efficient to be used in practical applications.
	\end{enumerate}
	
	\par 
	The organization of this paper is as follows: Section \ref{related_works} presents a brief overview of some existing works related to the proposed scheme. Section \ref{problem_statement} outlines the problem statement, where the system model, threat model, design and security goals, frameworks and security model of the proposed scheme are presented. Section \ref{preli} gives a brief overview of the role hierarchy, bilinear pairing properties, a group key distribution technique, and some mathematical assumptions, which will be used throughout this paper. Section \ref{proposed_scheme} details the proposed scheme including an overview followed by its main construction. Section \ref{analysis} presents a detailed security and performance analyses of the proposed scheme, and finally section \ref{conclusion} concludes this paper.
	
	\section{Related Works}
	\label{related_works}
	
	This section presents a brief overview of some notable works in the keyword search area, including some cryptographic RBAC based data access control schemes.
	
	\subsection{Keyword Search over Encrypted Data}
	Data search over encrypted data has been extensively studied since the past decade. Song \emph{et al.} presented the first practical symmetric key cryptography based searchable encryption scheme that can search full text over encrypted data \cite{Song2000}. Leter, several searchable encryption schemes have been proposed, for various functionalities and security requirements, based on either symmetric key cryptography (SKC) \cite{Curtmola2006, Kamara2012, Li2019, Hoang2019, Liu2018} or public-key cryptography (PKC) \cite{Boneh2004, Boneh2007, Sun2016, Hu2017, Miao2017, Chaudhari2019}. 
	\par 
	In \cite{Curtmola2006}, Curtmola \emph{et al.} proposed a SKC based keyword search scheme for \emph{multi-user} settings\footnote{Multi-user settings enable the data owners to authorize any number of users to perform keyword search operations.}, which can perform single keyword search. In \cite{Kamara2012}, Kamara \emph{et al.} proposed a dynamic version of the scheme \cite{Curtmola2006} that can add and delete files at any time efficiently. However, the scheme \cite{Kamara2012} leaks significant information while performing update operation \cite{Hoang2019}. In \cite{Li2019}, Li \emph{et al.} proposed a SKC based forward search privacy scheme, which prevents any leakage of information about the past queries. Later on, in \cite{Liu2018}, Liu \emph{et al.} proposed a keyword search scheme which enables the users to verify the search results against the dishonest servers. Although the SKC based keyword search schemes provides better efficiency in terms computation cost, PKC based keyword search schemes provide more flexible and expressive search queries \cite{Sun2016}. 
	\par 
	Recently, many PKC based authorized keyword search schemes have been proposed based on Attribute-Based Encryption (ABE) \cite{Sun2016, Hu2017, Miao2017, Chaudhari2019, Sultan2019}, where any user having a qualified set of attributes that satisfy an access policy can perform search operation using some keywords. That is, these schemes provide authorized keyword search, which allows only intended users to do the search in multi-user settings. In \cite{Sun2016, Sultan2019}, Sun \emph{et al.} and Sultan \emph{et al.} proposed keyword search schemes using ABE technique. The schemes provide both single and conjunctive keyword search without introducing any additional overhead in the system. In \cite{Hu2017}, Hu \emph{et al.} proposed another ABE based keyword search scheme for dynamic policy update, where the data owners can securely update the access policies using proxy re-encryption and secret sharing techniques. In \cite{Miao2017}, Miao \emph{et al.} proposed an ABE based keyword search scheme for hierarchical data, which also supports conjunctive keyword search. In \cite{Chaudhari2019}, Chaudhari \emph{et al.} proposed an authorized keyword search scheme using ABE, which hides the access policy from all the intended entities including the public cloud. However, all the aforementioned schemes do not support role hierarchy property and inheritance property. 

	
	
	\subsection{Cryptographic RBAC based Data Access Control}
	A cryptographic RBAC based data access control mechanism integrates the traditional RBAC model with cryptographic encryption method to enforce RBAC access policy on encrypted data. It enables the data to be encrypted using RBAC access policy defined over some role(s). Any user, possessing the required role(s) satisfying the associated RBAC access policy is allowed to decrypt the data. Some notable works in this area are \cite{Akl1983, Lin2011, Tang2016b, Chen2017, Pareek2018, Zhou2013, Zhu2013, Perez2017}, where \cite{Akl1983, Lin2011, Tang2016b, Chen2017, Pareek2018} are based on Hierarchical Key Assignment (HKA) method and \cite{Zhou2013, Zhu2013, Perez2017} are based on RBE method. 
	\par 
	Access control using HKA method has been studied in the early 1980s. In \cite{Akl1983}, Akl \emph{et al.} presented the first cryptographic hierarchical access control technique to solve the hierarchical multi-level security problem, where authorized users are allowed to possess different access privileges. The users are grouped into disjoint sets (or classes) and form a hierarchical structure of classes. Each class is assigned with a unique encryption key and a public parameter in such a way that a higher-level class can derive encryption keys of any lower-level classes using its own encryption key and some public parameters. Later on, several other hierarchical access control schemes have been proposed using different techniques, e.g. \cite{Lin2011, Tang2016b, Chen2017, Pareek2018}. However, the main drawback of the HKA schemes is the high complexity in setting up the encryption keys for a large set of users \cite{Zhou2013}. Also, the user revocation is a challenging task, as all the encryption keys that are known to the revoked users, and their related public parameters need to update per user revocation which may incur a high overhead on the system.
	\par 
	In \cite{Zhou2013}, Zhou \emph{et al.} proposed the first RBE scheme for data sharing in an untrusted hybrid cloud environment. In \cite{Zhou2013}, the ciphertexts and secret keys of the users are constant in size. This scheme also offers user revocation capability. In  \cite{Zhu2013}, Zhu \emph{et al.} proposed another RBE scheme. In this scheme, the ciphertext size linearly increases with the number of roles. In \cite{Perez2017}, Perez \emph{et al.} proposed a data-centric RBAC based data access control mechanism for cloud storage systems using the concept of proxy re-encryption and identity-based encryption techniques. To share data with the authorized users, the data owner generates proxy re-encryption keys based on some RBAC access policies and keeps the re-encryption keys along with the ciphertexts in the cloud storage servers. When an authorized user accesses the ciphertext, the service provider re-encrypts the ciphertext using the proxy re-encryption keys based on a RBAC access policy. However, none of \cite{Zhou2013, Zhu2013, Perez2017} support multi-organization cloud storage systems, where the same user can possess roles from more than one independent organizations. Moreover, \cite{Zhou2013, Zhu2013, Perez2017} do not address keyword search functionality.
	
	\begin{figure}[t]
		\centering
		\scalebox{3.5}{\includegraphics[width=2.6cm, height=1.7cm]{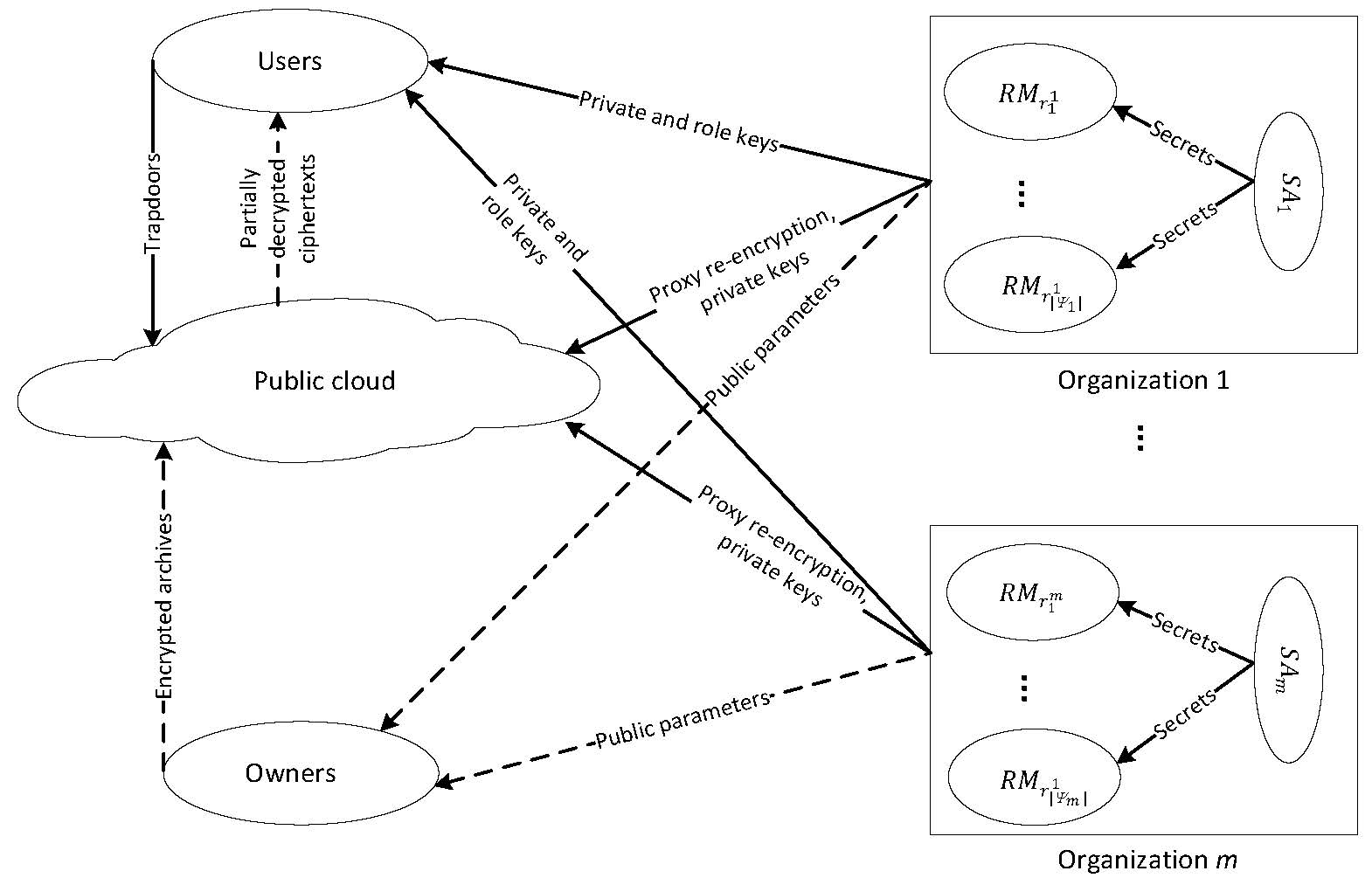}}
		\caption{Proposed System Model}
		\label{system_model}
	\end{figure}
	
	\section{Problem Statement}
	\label{problem_statement}
	This section presents the \emph{System Model}, \emph{Threat Model}, \emph{Design and Security Goals}, \emph{Framework}, and \emph{Security Models} of the proposed scheme.  
	\subsection{System Model}
	\label{cloud_model}
	Figure \ref{system_model} shows the proposed system model, where the doted and dark lines represent public channel and secure-channel such as SSL (Secure Sockets Layer) respectively. It comprises five entities, namely, \emph{System Authorities}, \emph{Role-Managers}, \emph{Data Owners}, \emph{Users}, and \emph{Public Cloud} having the following responsibilities:  
	\begin{itemize}
		\item \emph{System Authority (SA)}: Each organization has one SA, which maintains the role hierarchy of that organization. It generates system public parameters and master secrets for the organization. SA also maintains all the role-managers that are associated with the organization, and it issues  secret keys for each role-manager. In addition, SA issues private and public keys for all the registered users. Further, it issues private, public and proxy re-encryption keys to the public cloud. Moreover, SA is responsible for revoking users from the system when needed.
		
		\item \emph{Role-Manager (RM)}: It is an entity of an organization which manages the role(s). Note that, the roles are assigned by the SA. In addition, it also issues and manages role-keys for the users. 
		
		\item \emph{Data Owners (owners)}: It is an entity who owns the data and wants to outsource his/her data to the public cloud. An owner first encrypts data using a RBAC access policy before outsourcing to the public cloud. The owner first encrypts a plaintext data using a random secret key by following any secure symmetric key encryption algorithm, e.g., \emph{Advanced Encryption Standard} (AES). Afterward, the owner chooses a set of keywords associated with the plaintext data and encrypts those keywords along with the random key using the chosen RBAC access policy. The owner then combines all the ciphertexts into one archive and outsources it to the cloud storage servers.
		
		\item \emph{Users}: It is an entity who wants to access the outsourced data. Each user must register with SA(s) to receive private and public keys associated with the organization(s) from which he/she wants to access data. Also, a user receives a unique role-key for each role he/she possesses from the respective role-manager. When a user wants to access data, the user computes a trapdoor using his/her private keys, role-keys, the desired keyword(s) and sends it to the public cloud. 
		
		\item \emph{Public Cloud}: It is a third-party entity which manages the cloud storage servers. The main responsibility of the public cloud is to store owners' encrypted data. Moreover, it is also responsible for performing keyword search operation over the encrypted data. It is assumed that the public cloud correctly performs search operations using the received trapdoors if and only if the requested user has proper roles. It is also assumed that it partially decrypts all the ciphertexts that have a matching keyword(s) with the trapdoors.
	\end{itemize}
	\subsection{Threat Model}
	\label{threat_model}
	Public cloud is considered as an honest-but-curious entity. That is, public cloud honestly performs all the assigned tasks, but it may try to gain additional privacy information from the data available to it. The users may be malicious, and they may try to collude among themselves to gain access to the data beyond their access privileges. The users, having insufficient access rights, may also try to collude with the public cloud for gaining access to the data beyond their access rights. It is assumed that all the SAs and RMs are fully trusted entities. 
	The threat model is supplemented by a Security Model in Section \ref{security_model}.
	
	\subsection{Design and Security Goals}
	\label{design_security_goals}
	The proposed scheme aims to achieve the following functionality and security goals.
	\par 
	\textbf{Functionality Goals}: The proposed scheme should provide the following functionalities.
	\begin{enumerate}
		\item \emph{Authorized Keyword Search}: Only the users, having proper roles according to the defined RBAC access policy, are authorized to perform keyword search operations over the encrypted data. That is, any unintended users should not get access to the encrypted (outsourced) data. 
		\item \emph{Role-Based Data Sharing}: Only the users, possessing the proper roles according to the defined RBAC access policy, can have access to the plaintext data through the decryption operation. 
		\item \emph{Role Management by Multiple organizations}: The roles assigned to users can be managed by more than one organization and can be simultaneously used for data sharing and keyword search operations. 
		\item \emph{Conjunctive Keyword Search}: Users can search for multiple keywords using a single search request.
		\item \emph{Outsourced Decryption}: Users can delegate most of the computationally expensive operation to the public cloud without disclosing any sensitive information.
		
		\item \emph{Prior Authentication}: The public cloud can authenticate a user before performing the costly keyword search and outsourced decryption operations for the user.
		
		\item \emph{Revocation}: Revocation is supported in two following ways:
		\begin{itemize}
			\item \emph{Complete user revocation}: SA can prevent unintended users from accessing its data.
			\item \emph{Role-level user revocation}: SA can revoke one or more roles of a user. The idea is that the revoked user can no longer use the revoked roles for accessing data, while the same user should be able to access data using his/her non-revoked roles if they are qualified enough according to the RBAC access policy.
		\end{itemize}
		
	\end{enumerate}
	\textbf{Security Goals}: The proposed scheme should fulfil the following security requirements:
	\begin{enumerate}
		\item \textit{Data Confidentiality}: Any entity, including the public cloud should not be able to access the plaintext data unless they have proper roles satisfying the defined RBAC access policy. This security notion can be captured by \emph{Semantic Security}. This security notion is also referred to as \emph{Indistinguishability against Chosen Plaintext Attack (IND-CPA).}
		\item \textit{Keyword Secrecy}: Using unqualified search requests or trapdoors, any entity including the public cloud should not be able to learn any useful information about the plaintext keywords associated with the encrypted data. Similarly, any outsider (neither the requesting user nor the public cloud) should be able to learn any useful information about the keywords from the trapdoors. These two security notions can be captured by \emph{Keyword Semantic Security}. This security notion is also referred to as \emph{Indistinguishability against Chosen Keyword Attack (IND-CKA).}
		\item \textit{Forward and Backward Secrecy}: Forward secrecy represents that any new user having qualified roles should be able to decrypt the ciphertexts which are encrypted before he/she joined the system. Backward secrecy represents that a revoked user should not be able to decrypt the ciphertexts which are published after his/her revocation using the revoked roles.
		
		\item \textit{Resistance against Replay Attacks}: If one or more valid trapdoor is exposed to an adversary, the adversary should not be able to launch replay attacks. Many recent keyword search schemes, e.g., \cite{Sun2016, Hu2017} are susceptible to replay attacks if the trapdoors are exposed, as the adversary can re-use the exposed trapdoors using a fresh random number each time she/he wants to perform a keyword search. 
	\end{enumerate}
	\begin{table}[t]
		\tabcolsep 1.0pt
		\centering
		\caption{NOTATIONS}
		\begin{tabular}{p{2cm}p{6.9cm}}
			\hline
			Notation  & Description
			\\[0.5ex]    \hline
			$q$ & a large prime number   \\
			$\mathbb{G}_1, \mathbb{G}_T$ & two cyclic multiplicative groups of order $q$   \\
			$H_1(.), H_2(.)$ & hash functions $H_1: \{0, 1\}^*\rightarrow \mathbb{Z}_q^*$ and $H_2: \mathbb{G}_1\rightarrow \mathbb{Z}_q^*$\\
			$\Phi$ & set of system authorities in the system\\
			$m$ & total number of system authorities in the system\\
			$\Psi_k$ & set of roles associated with a role hierarchy of the $k^{th}$ system authority\\
			$\Gamma$ & set of all roles associated with a ciphertext\\
			$\Gamma_{\Phi}$ & system authorities associated with a ciphertext\\
			$\mathbb{S}_{\mathtt{ID_u}}$& set of roles associated with the user $\mathtt{ID_u}$\\
			$r_{k, i}$ & $i^{th}$ role managed by $k^{th}$ system authority\\
			$\mathbb{R}_{k, i}$ & the set of ancestor roles of $r_{k, i}$ \\
			$\mathtt{ID_u}$ & unique identity of the $u^{th}$ user\\
			$\mathtt{ID_c}$ & unique identity of the public cloud\\
			$\mathtt{RM_{r^k_i}}$ & role-manager which manages role $r^k_i$\\
			$ts$ & current timestamp\\
		\end{tabular}
		\label{notation}
	\end{table}
	\subsection{Framework} 
	\label{franework}
	Broadly the proposed scheme is divided into nine main phases, namely, \emph{System Setup, Management of Roles, Public Cloud Key Generation, New User Enrolment, Role Assignment, Data Encryption, Trapdoor Generation, Data Search}, and \emph{Decryption}. SAs initiate the \emph{System Setup} phase to generate mutually agreed public parameters and master secret through the \textsc{SystemSetup} algorithm. SA performs the \emph{Manage of Role} phase to initialize its role hierarchy and generates role related parameters (both public and secret parameters). It also generates proxy re-encryption keys for the public cloud. It consists of the \textsc{ManageRole} algorithm. SA generates  private and public keys for the public cloud in the \emph{Public Cloud Key Generation} phase using the \textsc{PubCloudkeyGen} algorithm. In the \emph{New User Enrolment} phase, SA mainly issues private and public keys for each registered users through the \textsc{UserPrivKeyGen} algorithm. Role-managers perform \emph{Role Assignment} phase, where they assign roles in the form of role-keys to the users based on their responsibilities and profile in the organization. It consists of the \textsc{UserRoleKeyGen} algorithm. In the \emph{Data Encryption} phase, the owner encrypts data and associated keywords using a RBAC access policy. It consists of the \textsc{Enc} algorithm. To perform keyword search as well as outsourced decryption, the users generate trapdoors in the \emph{Trapdoor Generation} phase using the \textsc{TrapGen} algorithm. The public cloud performs the \emph{Data Search} phase, which consists of \textsc{Authentication}, \textsc{KeySearch}, and \textsc{PartialDec} algorithms. In the \textsc{Authentication}, the public cloud authenticates the requesting user and checks freshness of the keyword search request (i.e., trapdoor) to prevent any replay attacks. In the \textsc{KeySearch}, the public cloud performs keyword search operation on the encrypted data using the received trapdoor. In the \textsc{PartialDec}, the public cloud performs outsourced decryption operations. In this algorithm, the public cloud partially decrypts the ciphertexts which are returned by the \textsc{KeySearch} algorithm. Finally, the user performs \emph{Decryption} phase to decrypt all the partially decrypted ciphertexts received from the public cloud. This phase comprises \textsc{Dec} algorithm. A brief overview of the different algorithms of these phases are explained next. The notations used in this paper are shown in Table \ref{notation}.
	\begin{itemize}
		\item \textsc{SystemSetup} $\left(\left(\mathtt{PP, \{MS_k\}_{\forall k\in \Phi}}\right)\leftarrow 1^\Lambda\right)$: It takes a security parameter $\Lambda$ as input. It outputs public parameter $\mathtt{PP}$ and master secret $\mathtt{MS_k}$ for each SA in the system.
		
		\item \textsc{ManageRole} $\Big(\Big(\mathtt{RP_k}, \{\mathbb{PK}_{r^k_i}\}_{\forall r^k_i \in \Psi_k}, \{\mathtt{RS_{r^k_i}}\}_{\forall r^k_i\in \Psi_k},\\ \left\{\left\{\mathtt{PKey^{r^k_w}_{r^k_i}}\right\}_{\forall r^k_w\in \mathbb{R}_{r^k_i}\setminus \{r^k_i\}}\right\}_{\forall r^k_i\in \Psi_k}\Big)\leftarrow \Big(\mathcal{H}, \mathtt{PP}\Big)\Big)$: It takes a role hierarchy $\mathcal{H}$ and public parameter $\mathtt{PP}$ as input. It outputs role secret parameter $\mathtt{RP_k}$, and for each role $r^k_i\in \Psi_k$, it outputs the role public key $\mathbb{PK}_{r^k_i}$, role secret $\mathtt{RS_{r^k_i}}$ and proxy re-encryption keys $\mathtt{PKey^{r^k_w}_{r^k_i}}$.
		
		\item \textsc{PubCloudKeyGen} $\big(\left(\mathtt{Priv^k_c}, \mathtt{Pub^{1k}_c}, \mathtt{Pub^{2k}_c}\right)\leftarrow \left(\mathtt{PP}, \mathtt{MS_k}, \mathtt{ID_c}\right)\big)$: It takes public parameter $\mathtt{PP}$, master secret $\mathtt{MS_k}$ and identity $\mathtt{ID_c}$ of the public cloud as input. It outputs a private key $\mathtt{Priv^k_c}$ and two public keys $(\mathtt{Pub^{1k}_c}, \mathtt{Pub^{2k}_c})$ for the public cloud. 
		
		\item \textsc{UserPrivKeyGen} $\big(\left(\mathtt{SK^k_{ID_u}}, \mathtt{Pub^k_{ID_u}}, \mathtt{US_{ID_u}}\right)\leftarrow \left(\mathtt{MS_k}, \mathtt{PP}, \mathtt{ID_u}\right)\big)$: It takes master secret $\mathtt{MS_k}$, public parameter $\mathtt{PP}$, and unique identity of a user $\mathtt{ID_u}$ as input. It outputs a secret key $\mathtt{SK^k_{ID_u}}$, a public key $\mathtt{Pub^k_{ID_u}}$ and a user secret $\mathtt{US_{ID_u}}$ for the user $\mathtt{ID_u}$. 
		
		\item \textsc{UserRoleKeyGen} $((\mathtt{RK^{1, u}_{r^k_x}}, \mathtt{RK^{2, u}_{r^k_x}})\leftarrow (\mathtt{PP}, \mathtt{US_{ID_u}}, \mathtt{RS_{r^k_x}}, t_{r^k_x}))$: It takes public parameter $\mathtt{PP}$, user secret $\mathtt{US_{ID_u}}$, and role secret $\mathtt{RS_{r^k_x}}$, role related secret $t_{r^k_x}\in \mathbb{Z}_q^*$ of $r^k_x$ as input. It outputs two role-keys $(\mathtt{RK^{1, u}_{r^k_x}}, \mathtt{RK^{2, u}_{r^k_x}})$ associated with the role $r^k_x$ for the user $\mathtt{ID_u}$.
		
		\item \textsc{Enc} $\Big(\mathbb{CT}\leftarrow \left(\mathtt{PP}, \mathtt{Pub^{1k}_{c}}, \mathtt{Pub^{2k}_c}, \mathtt{M}, \mathbb{W}, \Gamma, \Gamma_{\Phi}\right)\Big)$: It takes public parameter $\mathtt{PP}$, both the public keys $(\mathtt{Pub^{1k}_c}, \mathtt{Pub^{2k}_c})$ of the public cloud, actual plaintext message $\mathtt{M}$, keyword set $\mathbb{W}$ (associated with the actual plaintext message $\mathtt{M}$), a RBAC access policy $\Gamma$, and a set $\Gamma_{\Phi}$ of SAs which are associated with $\Gamma$ as input. It outputs a ciphertext $\mathbb{CT}$.
		
		\item \textsc{TrapGen} $(\left(\mathtt{Trap}, v\right)\leftarrow (\{\mathtt{RK^{1, u}_{r^k_x}}, \mathtt{RK^{2, u}_{r^k_x}}\}_{\forall r^k_x\in \mathbb{S}_{\mathtt{ID_u}}}, \mathtt{SK^k_{ID_u}}, \mathbb{S}_{\mathtt{ID_u}}, w))$: It takes both the role-keys $(\mathtt{RK^{1, u}_{r^k_x}}, \mathtt{RK^{2, u}_{r^k_x}})$, secret key $\mathtt{SK^k_{ID_u}}$, user role set $\mathbb{S}_{\mathtt{ID_u}}$  of a user $\mathtt{ID_u}$, and keyword $w$ as input. It outputs a trapdoor $\mathtt{Trap}$ and a random number $v\in \mathbb{Z}_q^*$.
		
		\item \textsc{Authentication} $\Big(\left(V^1_3/\perp\right)\leftarrow \Big(\{\mathtt{Priv^k_c}\}_{\mathtt{\forall k\in \Gamma_{\Phi}}}, \mathtt{Trap}, \mathtt{Pub^k_{ID_u}}, \mathtt{ID_u}, ts'\Big)\Big)$: It takes private keys of the public cloud $\mathtt{Priv^k_c}$ issues by all the system authorities in the set $\Gamma_{\Phi}$, trapdoor $\mathtt{Trap}$, public key $\mathtt{Pub^k_{ID_u}}$ of a user $\mathtt{ID_u}$, identity $\mathtt{ID_u}$ of the user, and current timestamp $ts'$ as input. If the user $\mathtt{ID_u}$ is legitimate and the trapdoor was not previously issued, it outputs $V_3^1$ for a successful authentication. Otherwise, it outputs $\perp$ which represents either an unsuccessful authentication or an invalid trapdoor. 
		
		\item \textsc{KeySearch} $((\mathbb{CT}/\perp)\leftarrow (\mathbb{CT}, \mathtt{Trap}, V^3_1))$: It takes trapdoor $\mathtt{Trap}$, $V^3_1$ and a ciphertext $\mathbb{CT}$ as input. It outputs the ciphertext $\mathbb{CT}$ if and only if for all $r^k_i\in \Gamma$ there is $r^k_x \in \mathbb{S}_{\mathtt{ID_u}}$ such that $r^k_x\in \mathbb{R}_{r^k_i}$ and the keyword $w$ associated with the trapdoor has a match with a keyword associated with the ciphertext $\mathbb{CT}$. Otherwise, it outputs $\perp$, which represents an unsuccessful search operation.
		
		\item \textsc{PartialDec} $\left(\mathbb{CT}'\leftarrow \left(\mathbb{CT}, \mathtt{Trap}, \{\mathtt{Priv^k_c}\}_{\forall k\in \Gamma_{\Phi}}, \mathbb{S}_{\mathtt{ID_u}}\right)\right)$: It takes the ciphertext $\mathbb{CT}$, trapdoor $\mathtt{Trap}$, private keys $\mathtt{Priv^k_c}$ of the public cloud associated with the system authorities in $\Gamma_{\Phi}$, and user role set $\mathbb{S}_{\mathtt{ID_u}}$ as input. It outputs a partially decrypted ciphertext $\mathbb{CT}'$.
		
		\item \textsc{FullDEC}$(\mathtt{M}\leftarrow (\mathbb{CT}', \mathtt{Priv_{ID_u}}, v))$: It takes the partially decrypted ciphertext $\mathbb{CT}'$, user private key $\mathtt{Priv_{ID_u}}$, and $v$ as input and outputs the actual plaintext message $\mathtt{M}$. 
	\end{itemize}

	\subsection{Security Model}
	\label{security_model}
	The two games, namely, \emph{Semantic Security against Chosen Plaintext Attack} (IND-CPA) and \emph{Semantic Security against Chosen Keyword Attack} (IND-CKA) are used to define the security model of the proposed scheme. These two games are defined next.
	\subsubsection{Semantic Security against Chosen Plaintext Attack}
	\label{CPA}
	The semantic security of the proposed scheme defined on \emph{Chosen Plaintext Attack} (CPA) security under \emph{Selective-ID Model}\footnote{In the Selective-ID security model, the adversary must submit a set of challenged roles before starting the security game. This is essential in our security proof to set up the role public key (please refer Section \ref{security_analysis} for more details).}. The CPA security can be illustrated using the following security game IND-CPA between a challenger $\mathcal{C}$ and an adversary $\mathcal{A}_1$.
	\par 
	\textsc{\textbf{Init}} Adversary $\mathcal{A}_1$ sends a challenged role set $\Gamma^*$, a keyword $w$ and two identities $\mathtt{ID_u^*}, \mathtt{ID^*_c}$ to the challenger $\mathcal{C}$. 
	
	\textsc{\textbf{Setup}} Challenger runs the \textsc{SystemSetup} algorithm to generate public parameters and master secrets. Challenger $\mathcal{C}$ generates role public keys, role secrets and proxy re-encryption keys using the \textsc{ManageRole} algorithm. It also generates public and private keys using the \textsc{PubCloudKeyGen} and \textsc{UserPrivKeyGen} algorithms. Challenger $\mathcal{C}$ sends the public parameter, role public keys, proxy re-encryption keys, public and private keys to the adversary $\mathcal{A}_1$. It keeps the master secret and role secrets in a secure place. 
	
	\par
	\textsc{\textbf{Phase 1}} Adversary $\mathcal{A}_1$ submits a role set $\mathbb{S}^*$ to the challenger $\mathcal{C}$ for role-keys so that there exits at least one role $r^k_x\in \mathbb{S}^*$ such that $r^k_x\notin \mathbb{R}_{r^k_i}$, where $r^k_i\in \Gamma^*$. Challenger $\mathcal{C}$ runs the \textsc{userRoleKeyGen} algorithm to generate role-keys for the adversary $\mathcal{A}_1$. Adversary $\mathcal{A}_1$ can send queries for the role-keys to the challenger $\mathcal{C}$ by polynomially many times.


	\par 
	\textsc{\textbf{Challenge}} When adversary $\mathcal{A}_1$ decides that \textsc{Phase 1} is over, it submits two equal length messages $\mathtt{K_0}$ and $\mathtt{K_1}$, which were not challenged before, to the challenger $\mathcal{C}$. Challenger $\mathcal{C}$ flips a random binary coin $\omega$ and encrypts message $\mathtt{K_\omega}$ using the \textsc{Enc} algorithm for the challenged role set $\Gamma^*$. Challenger $\mathcal{C}$ sends the encrypted message of $\mathtt{K_\omega}$ to adversary $\mathcal{A}_1$.

	\par 
	\textsc{\textbf{Phase 2}} Same as \textsc{\textbf{Phase 1}}.
	
	\par 
	\textsc{\textbf{Guess}} Adversary $\mathcal{A}_1$ outputs a guess $\omega'$ of $\omega$. The advantage of winning this game for adversary $\mathcal{A}_1$ is $Adv^{IND-CPA}_{\mathcal{A}_1}= \left|Pr[\omega'= \omega]- \frac{1}{2}\right|$. 
	
	\begin{definition}
		The proposed scheme is secure against chosen plaintext attack if $Adv^{IND-CPA}_{\mathcal{A}_1}$ is negligible for any polynomial time adversary $\mathcal{A}_1$.
	\end{definition}

	\subsubsection{Semantic Security against Chosen Keyword Attack}
	\label{CKA}
	The semantic security of the proposed keyword search scheme defined on Chosen Keyword Attack (CKA) security under the same \emph{Selective ID Model} as described in Section \ref{CPA}. The CKA security can be demonstrated using the following security game IND-CKA between a challenger $\mathcal{C}$ and an adversary $\mathcal{A}_2$.
	
	\par 
	\textsc{\textbf{Init}} Adversary $\mathcal{A}_2$ sends a set of challenged roles $\Gamma^*$ and two identities $\mathtt{ID_u^*}, \mathtt{ID^*_c}$ to the challenger $\mathcal{C}$. 
	
	\textsc{\textbf{Setup}} Challenger runs the \textsc{SystemSetup} algorithm to generate public parameters and master secrets. Challenger $\mathcal{C}$ generates role public keys, role secrets and proxy re-encryption keys using the \textsc{ManageRole} algorithm. It also generates public and private keys using \textsc{PubCloudKeyGen} algorithm and a public key using \textsc{UserPrivKeyGen} algorithm. Challenger $\mathcal{C}$ sends the public parameter, role public keys, proxy re-encryption keys, public and private keys to the adversary $\mathcal{A}_2$. It keeps the master secret and role secrets in a secure place. 
	
	\par
	\textsc{\textbf{Phase 1}} Adversary $\mathcal{A}_2$ submits a set of roles $\mathbb{S}^*$ and a keyword $w$ to the challenger $\mathcal{C}$ so that there exits at least one role $r^k_x\in \mathbb{S}^*$ such that $r^k_x\notin \mathbb{R}_{r^k_i}$, where $r^k_i\in \Gamma^*$. Challenger initiates the \textsc{TrapGen} algorithm to generate a trapdoor for the adversary $\mathcal{A}_2$. Finally, challenger $\mathcal{C}$ sends the generated trapdoor to the adversary $\mathcal{A}_2$. Afterwards, adversary $\mathcal{A}_2$ can send queries for the trapdoor to the challenger $\mathcal{C}$ by polynomially many times.
	
	\par 
	\textsc{\textbf{Challenge}} When adversary $\mathcal{A}_2$ decides that \textsc{Phase 1} is completed, it submits two equal length keywords $w_0$ and $w_1$, which were not challenged before, to the challenger $\mathcal{C}$. Challenger $\mathcal{C}$ flips a binary coin $\omega$ and encrypts keyword $w_\omega$ using the \textsc{Enc} algorithm for the challenged role set $\Gamma^*$. Challenger $\mathcal{C}$ sends the encrypted ciphertext of $w_\omega$ to the adversary $\mathcal{A}_2$.

	\par 
	\textsc{\textbf{Phase 2}} Same as \textsc{\textbf{Phase 1}}.
	
	\par 
	\textsc{\textbf{Guess}} Adversary $\mathcal{A}_2$ outputs a guess $\omega'$ of $\omega$. The advantage of winning this game for adversary $\mathcal{A}_2$ is $Adv^{IND-CKA}_{\mathcal{A}_2}= \left|Pr[\omega'= \omega]- \frac{1}{2}\right|$. 
	
	\begin{definition}
		The proposed scheme is secure against the chosen keyword attack if $Adv^{IND-CKA}_{\mathcal{A}_2}$ is negligible for any polynomial time adversary $\mathcal{A}_2$.
	\end{definition}
	
	\section{Preliminaries}
	\label{preli}
	This section presents an overview of a role hierarchy and bilinear pairing. It also presents an overview of a group key distribution mechanism and a mathematical assumption which is used in this paper.
	\begin{figure}[t]
		\centering
		\begin{subfigure}[!t]{0.16\textwidth}
			\includegraphics[width=\textwidth]{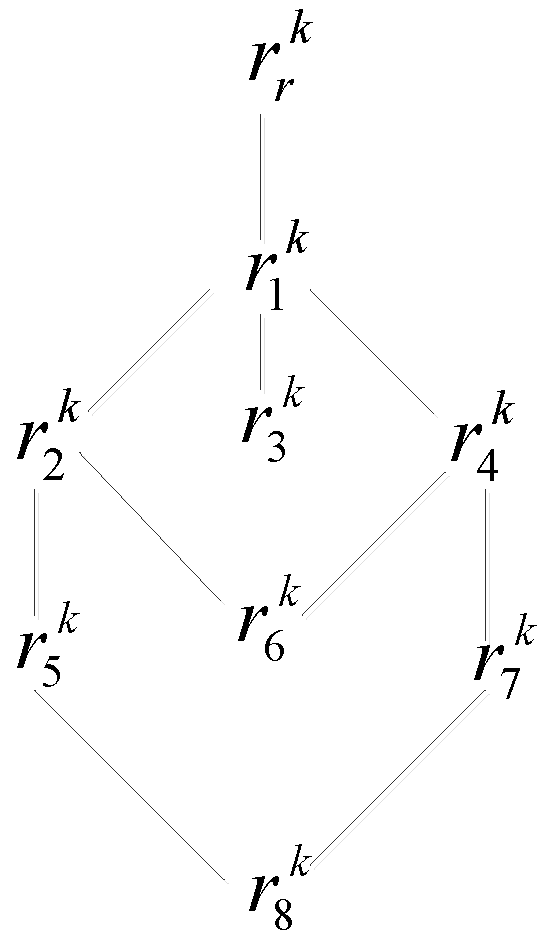}
			\caption{Role Hierarchy 1}
			\label{fig:RH1}
		\end{subfigure}
		~ 
		\begin{subfigure}[!t]{0.23\textwidth}
			\includegraphics[width=\textwidth]{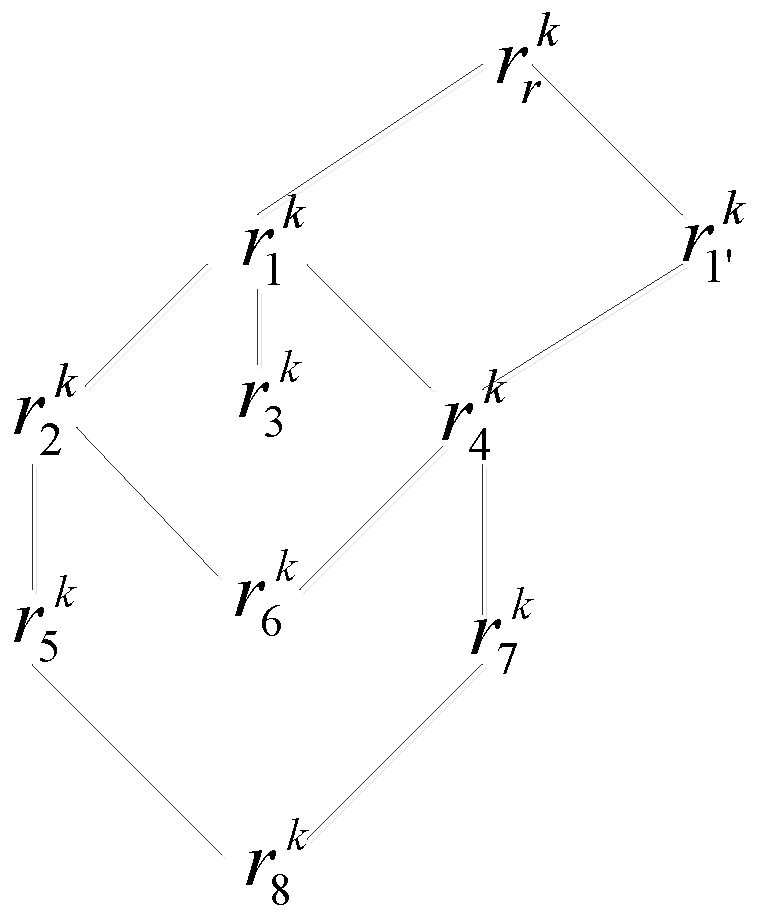}
			\caption{Role Hierarchy 2}
			\label{fig:RH2}           
		\end{subfigure}
		\caption{Sample Role Hierarchy (RH)}
		\label{fig:RH}
	\end{figure}

	\subsection{Role Hierarchy notations}
	\label{role_hierarchy}
	
	In the proposed scheme, roles are organized in a hierarchy where ancestor roles can inherit access privileges of its descendant roles. Figure \ref{fig:RH} shows two sample role hierarchies, namely Role Hierarchy 1 (Figure \ref{fig:RH1}) and Role Hierarchy 2 (Figure \ref{fig:RH2}). We consider Role Hierarchy 1 (Figure \ref{fig:RH1}) as an example to define the following notations of a role hierarchy.
	\begin{itemize}
		\item $r^k_r$: root role of a role hierarchy. We assume that in any role hierarchy there can be only one root role. 
		\item $\Psi_k$: set of all roles in the role hierarchy. For example, $\Psi_k= \{r^k_r, r^k_1, r^k_2, r^k_3, r^k_4, r^k_5, r^k_6, r^k_7, r^k_8\}$
		\item $\mathbb{R}_{r^k_i}$: ancestor set of the role $r^k_i$. For example, $\mathbb{R}_{r^k_8}= \{r^k_r, r^k_1, r^k_2, r^k_4, r^k_5, r^k_6, r^k_7, r^k_8\}, \mathbb{R}_{r^k_5}=\{r^k_r, r^k_1, r^k_2, r^k_5\}$ and $\mathbb{R}_{r^k_6}=\{r^k_r, r^k_1, r^k_2, r^k_4, r^k_6\}$.
	\end{itemize}
	
	\subsection{Bilinear Pairing}
	\label{pairing}
	Let $\mathbb{G}_1$ and $\mathbb{G}_T$ be two cyclic multiplicative groups of order $q$. Let $g$ be a generator of $\mathbb{G}_1$. The bilinear map $\hat{e}: \mathbb{G}_1\times \mathbb{G}_1\rightarrow \mathbb{G}_T$ has the following properties:
	\begin{itemize}
		\item \textit{Bilinear}: $\hat{e}\left(g^a, g^b\right)= \hat{e}\left(g, g\right)^{ab}$, $\forall g\in \mathbb{G}_1$ and $\forall(a, b)\in \mathbb{Z}_q^*$
		\item \textit{Non-degenerate}: $\hat{e}\left(g, g\right)\ne 1$
		\item \textit{Computable}: $\hat{e}(g, g)$ is efficiently computable for all $g\in \mathbb{G}_1$
	\end{itemize}
	\begin{figure*}[t]
		\centering
		\scalebox{5}{\includegraphics[width=2.3cm, height=1.8cm]{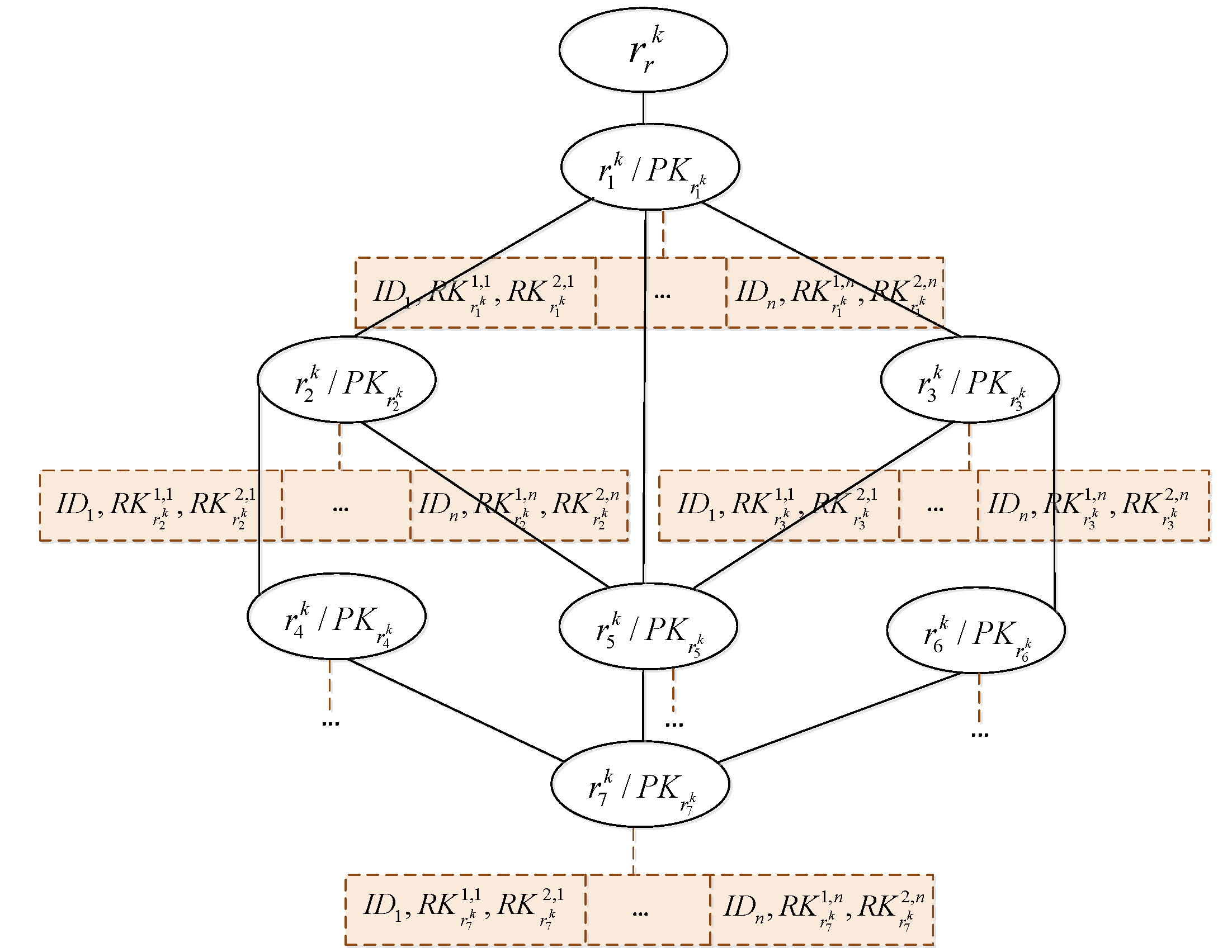}}
		\caption{Sample Role Key Hierarchy (RKH)}
		\label{RKH}
	\end{figure*}
	\subsection{Group Key Distribution}
	\label{key_distribution}
	In \cite{Burmester2005}, Burmester \emph{et al.} proposed a two round group key distribution scheme using the concept of Diffie-Hellman assumption. Their scheme works as follows:
	\par
	Let $\mathbb{U}= \{\mathtt{ID_1}, \mathtt{ID_2}, ..., \mathtt{ID_n}\}$ be the group of $n$ users. Suppose the users are arranged into a cycle. To compute a group key among the users, each user $\mathtt{ID_i}\in \mathbb{U}$ selects a random secret number $a_i\in \mathbb{Z}_q^*$ and broadcasts $x_i= g^{a_i}$ where $g$ is a generator of group $\mathbb{G}_1$. Afterward, it publishes $X_i= (\frac{x_{i+1}}{x_{i-1}})^{a_i}$. Finally, each user $\mathtt{ID_i}$ in the group computes a common key $\mathtt{CK}= g^{a_1\cdot a_2+ a_2\cdot a_3+ ...+ a_n\cdot a_1}$ without knowing others' secrets and without disclosing the common key to any other unintended entities.
	
	\subsection{Decisional Bilinear Diffie-Hellman (DBDH)}
	\label{complexity_assumption}
	Let $\mathbb{G}_1$ and $\mathbb{G}_T$ be two cyclic multiplicative groups of order $q$. Let $g$ be a generator of $\mathbb{G}_1$ and $\hat{e}: \mathbb{G}_1\times \mathbb{G}_1\rightarrow \mathbb{G}_T$ be an efficiently computable non-degenerate bilinear map. 
	The Decisional Bilinear Diffie-Hellman (DBDH) Assumption is defined as follows: 
	\label{DBDH}
	No probabilistic polynomial time adversary is able to distinguish the tuples $\left<g, g^a, g^b, g^c, Z= \hat{e}\left(g, g\right)^{abc}\right>$ and $\left<g, g^a, g^b, g^c, Z=\hat{e}\left(g, g\right)^{z}\right>$ with non-negligible advantage, where $(a, b, c, z)\in \mathbb{Z}_q^*$ are randomly chosen.  
	
	\section{Proposed Scheme}
	\label{proposed_scheme}
	This section presents the proposed scheme in details. First, a brief overview of the proposed scheme is presented, followed by its main construction.
	
	\subsection{Overview}
	The main goal of the proposed scheme is to enable the owners to enforce RBAC access policies on the encrypted data so that only the users with the authorized roles can perform the keyword search along with efficient data decryption. To achieve this, the proposed scheme devises a novel RBE technique that enables only the users having authorized roles satisfying the specified RBAC access policy to delegate the keyword search capability to the public cloud without disclosing any sensitive information. To reduce decryption cost at the user side, the devised RBE technique also enables the authorized users to delegate computationally expensive cryptographic operations to the public cloud. 
	\par 
	In the proposed scheme, each organization is allowed to maintain its own role hierarchy, and each role hierarchy is associated with a Role-Key Hierarchy (RKH). In Figure \ref{RKH}, a sample RKH is shown. Each node in a RKH represents a role, and each role, say $r^k_i$, is associated with a role public key, say $\mathtt{PK_{r^k_i}}$. In addition, each role is associated with a set of users who hav
	
	In the proposed scheme, each organization is allowed to maintain its own role hierarchy, and each role hierarchy is associated with a Role-Key Hierarchy (RKH). In Figure \ref{RKH}, a sample RKH is shown. Each node in a RKH represents a role, and each role (except the root role), say $r^k_i$, is associated with a role public key, say $\mathtt{PK_{r^k_i}}$. In addition, each role (except the root role) is associated with a set of users who have that role, and the users are assigned with a unique pair of role-keys for each role they possess\footnote{In our proposed scheme, the root role ($r^k_r$) is not assigned to any users and is internally managed by the SA. As such, we do not consider any user set with the root role in Figure \ref{RKH}. More details are given in the following sections.}. The role-keys are generated in such a way that the user can use them to compute trapdoors to perform a keyword search over the ciphertexts, which are encrypted using a role public key of any descendent role. The same trapdoor can also be used to perform the outsourced decryption operation. This in turn enables the users to gain access to the actual plaintext data. This process is illustrated as follows. Let us assume that the owner wants to authorize all the users having access privileges for the role $r^k_5$ to have access to data. The owner encrypts the data and the associated keywords using the role public key $\mathtt{PK_{r^k_5}}$. Any user who possesses any one of the roles in $\mathbb{R}_{r^k_5}=\{r^k_r, r^k_1, r^k_2, r^k_3, r^k_5\}$ can search and decrypt the encrypted data using their respective role-keys. That is, the user possesses a qualified role for accessing the ciphertext. Similarly, if the owner encrypts data and associated keywords using the role public keys $\mathtt{PK_{r^k_1}}$ and $\mathtt{PK_{r^k_4}}$, then any user who possesses roles in $\mathbb{R}_{r^k_1}=\{r^k_r, r^k_1\}$ and $\mathbb{R}_{r^k_4}=\{r^k_r, r^k_1, r^k_2, r^k_4\}$ respectively can perform keyword search and data decryption using their respective role-keys.

	\par 
	To support multi-organization data sharing, the proposed scheme takes advantage of an existing group key distribution protocol to generate a common master secret for all the participating organizations. This master secret is used for generating the system parameters, including public parameters and master secrets of each organizations. This allows a user to possess more than one role from different organizations. More details are given in the following subsection.

	\subsection{Construction}
	\label{construction}
	A detailed description of all the phases of the proposed scheme is presented as follows.

	\subsubsection{System Setup}
	\label{system_setup}
	In this phase, the system authority of each organization mutually publishes the system public parameter, and they generate their own master secrets. This phase consists of the \textsc{SystemSetup} algorithm which is defined next.
	
	\paragraph{\textsc{SystemSetup} $\left(\left(\mathtt{PP, \{MS_k\}_{\forall k\in \Phi}}\right)\leftarrow 1^\Lambda\right)$} It chooses two cyclic multiplicative bilinear groups $\mathbb{G}_1$ and $\mathbb{G}_T$ of order $q$, where $q$ is a large prime number. It also chooses a generator $g\in \mathbb{G}_1$, random numbers $\{\eta_k, \mu_k, \mathtt{x_k}\}_{\forall k\in \Phi}\in \mathbb{Z}_q^*$ and two hash functions $H_1: \{0, 1\}^*\rightarrow \mathbb{Z}_q^*, H_2: \mathbb{G}_1\rightarrow \mathbb{Z}_q^*$. Afterward, all the system authorities follow a group key generation protocol, as described in Section \ref{key_distribution}, to compute a shared secret $g^{y}$, where $y=  y_1\cdot y_2+ y_2\cdot y_3+ ...+ y_{m}\cdot y_1$ and $m$ is the total number of system authorities. Afterward, it computes $Y= \hat{e}(g, g)^y$ and $h^k_1= g^{\eta_k}$, and then publishes the system public parameter $\mathtt{PP}= \left<\mathbb{G}_1, \mathbb{G}_T, g, \hat{e}, H_1, H_2, Y, \{h^k_1\}_{\forall k\in \Phi}\right>$. Each system authority, say $k^{th}$ system authority $\mathtt{SA_k}$, keeps master secret $\mathtt{MS_k}=\left<g^y,\eta_k, \mu_k, \mathtt{x_k}\right>$ in a secure place.
	\begin{remark}
		All the system authorities can check validity of $Y$ by comparing $\hat{e}(g^y, g)\stackrel{?}{=}Y$. Also, any number of new system authorities can be added in the system at any time by sharing the existing group secret key, i.e., $g^y$.
	\end{remark}

	\subsubsection{Management of Roles}
	\label{manage_role}
	In this phase, a system authority generates the role related parameters. Suppose the system authority $\mathtt{SA_k}$ wants to initialize a role hierarchy $\mathcal{H}$. The system authority $\mathtt{SA_k}$ generates role secrets $\mathtt{RS_{r^k_i}}$ and role public keys $\mathbb{PK}_{r^k_i}$ for each role $r^k_i$ associated with $\mathcal{H}$. It also computes proxy re-encryption keys $\{\mathtt{PKey^{r^k_x}_{r^k_i}}\}_{r^k_x\in \mathbb{R}_{r^k_i}\setminus\{r^k_i\}}$ for each role $r^k_i$ (except the root role) associated with the role hierarchy $\mathcal{H}$. It stores the role public keys in its public bulletin board and keeps the role secrets in a secure place. It also shares each role secret to its corresponding role-manager. That is, the role secret associated with $r^k_i$, i.e., $\mathtt{RS_{r^k_i}}$ is shared with the role-manager which manages $r^k_i$, i.e., $\mathtt{RM_{r^k_i}}$. Moreover, the proxy re-encryption keys are sent to the proxy-server (i.e., public cloud) using secure-channels. This phase consists of the \textsc{ManageRole} algorithm which is defined next.
	
	\paragraph{\textsc{ManageRole} $\Big(\Big(\mathtt{RP_k}, \{\mathbb{PK}_{r^k_i}\}_{\forall r^k_i \in \Psi_k}, \{\mathtt{RS_{r^k_i}}\}_{\forall r^k_i\in \Psi_k},\\ \left\{\left\{\mathtt{PKey^{r^k_w}_{r^k_i}}\right\}_{\forall r^k_w\in \mathbb{R}_{r^k_i}\setminus \{r^k_i\}}\right\}_{\forall r^k_i\in \Psi_k}\Big)\leftarrow \Big(\mathcal{H}, \mathtt{PP}\Big)\Big)$} It selects random numbers $\{t_{r^k_i}\}_{\forall r^k_i\in \Psi_k}\in \mathbb{Z}_q^*$. It computes role secrets $\mathtt{RS_{r^k_i}}$, role public key $\mathbb{PK}_{r^k_i}= \left<\mathtt{PK_{r^k_i}}, r^k_i, \mathbb{R}_{r^k_i}\right>$ and proxy re-encryption key $\{\mathtt{PKey^{r^k_w}_{r^k_i}}\}_{\forall r^k_w\in \mathbb{R}_{r^k_i}\setminus \{r^k_i\}}$ for each role $r^k_i\in (\Psi_k\setminus\{r^k_r\})$, where
	\begin{align}
	\mathtt{RS_{r^k_i}}= & \prod_{\forall r^k_j\in \mathbb{R}_{r^k_i}}t_{r^k_j}\\
	\mathtt{PK_{r^k_i}}= & g^{\prod_{\forall r^k_j\in \mathbb{R}_{r^k_i}}t_{r^k_j}} \\
	\mathtt{PKey^{r^k_w}_{r^k_i}}= & \prod_{\forall r^k_j\in \mathbb{R}_{r^k_i}\setminus \{r^k_w\}} t_{r^k_j}\\
	\end{align}
	$\mathbb{R}_{r^k_i}$ is the set of ancestor roles of $r^k_i$ and role secret parameter $\mathtt{RP_k}= \left<\{t_{r^k_i}\}_{\forall r^k_i\in\Psi_k}\right>$. The system authority sends each secret role parameter and role secret associated with a role to the role-manager which is responsible of its management. For example, secret role parameter $t_{r^k_i}$ and role secret $\mathtt{RS_{r^k_i}}$ are shared with the role-manager $\mathtt{RM_{r^k_i}}$.  Note that the root role is internally managed by the system authority. As such, no proxy re-encryption key, role secret key, role public key are generated for the root role. 
	
	\subsubsection{Public Cloud Key Generation}
	\label{clou_key_gen}
	In this phase, a system authority generates keys for the public cloud. Let the system authority $\mathtt{SA_k}$ wants to issue keys for the public cloud. It computes a private key $\mathtt{Priv^k_c}$, two public keys $(\mathtt{Pub^{1k}_c}, \mathtt{Pub^{2k}_c})$ and sends the private key $\mathtt{Priv^k_c}$ to the public cloud using a secure-channel. It stores both the public keys $(\mathtt{Pub^{1k}_c}, \mathtt{Pub^{2k}_c})$ in its public bulletin board. This phase consists of the \textsc{PubCloudKeyGen} algorithm which is defined next.
	
	\paragraph{\textsc{PubCloudKeyGen} $\big(\left(\mathtt{Priv^k_c}, \mathtt{Pub^{1k}_c}, \mathtt{Pub^{2k}_c}\right)\leftarrow \left(\mathtt{PP}, \mathtt{MS_k}, \mathtt{ID_c}\right)\big)$} It computes a private key $\mathtt{Priv^k_c}$ and two public keys $(\mathtt{Pub^{1k}_c}, \mathtt{Pub^{2k}_c})$ for the public cloud as follows:
	\begin{align}
	\mathtt{Priv^k_c}= & H_2\big(\left(g^y\right)^{\frac{H_1(\mathtt{ID_c})}{\mathtt{x_k}}}\big)= H_2\big(g^{\frac{y\cdot H_1(\mathtt{ID_c})}{\mathtt{x_k}}}\big)\\
	\mathtt{Pub^{1k}_c}= & g^{\mu_k\cdot \mathtt{Priv^k_c}}\\
	\mathtt{Pub^{2k}_c}= & g^{\mathtt{x_k}\cdot \mathtt{Priv^k_c}}
	\end{align}
	\subsubsection{New User Enrolment}
	\label{user_enrolment}
	A system authority initiates this phase when a new legitimate user, say $\mathtt{ID_u}$, wants to join an organization, say $k^{th}$ organization. The system authority $\mathtt{SA_k}$ generates a secret key $\mathtt{SK^k_{ID_u}}$ and public key $\mathtt{Pub^k_{ID_u}}$ for the user $\mathtt{ID_u}$. It also generates a user secret $\mathtt{US_{ID_u}}$ which is shared with all the role-managers under its control. $\mathtt{SA_k}$ sends the secret key $\mathtt{SK^k_{ID_u}}$ to the user $\mathtt{ID_u}$ using a secure-channel and keeps the public key $\mathtt{Pub^k_{ID_u}}$ in its public bulletin board. This phase comprises the \textsc{UserPrivKeyGen} algorithm which is defined next.
	\paragraph{\textsc{UserPrivKeyGen} $\big(\left(\mathtt{SK^k_{ID_u}}, \mathtt{Pub^k_{ID_u}}, \mathtt{US_{ID_u}}\right)\leftarrow \left(\mathtt{MS_k}, \mathtt{PP}, \mathtt{ID_u}\right)\big)$} It issues a pair of secret key $\mathtt{SK^k_{ID_u}}= \left<\mathtt{Priv_{ID_u}}, \mathtt{Priv^{k}_{ID_u}}\right>$, public key $\mathtt{Pub^k_{ID_u}}$, and a user secret $\mathtt{US_{ID_u}}$ as follows: 
	\begin{align}
	\mathtt{Priv_{ID_u}}= & H_2\left((g^y)^{H_1(\mathtt{ID_u})}\right)= H_2\left(g^{y\cdot H_1(\mathtt{ID_u})}\right)\\
	\mathtt{Priv^{k}_{ID_u}}= & \left(g^y\right)^{\frac{\mathtt{Priv_{ID_u}}}{\eta_k}}\cdot g^{\frac{\mathtt{x_k}}{\eta_k}}= g^{\frac{y\cdot \mathtt{Priv_{ID_u}}+ \mathtt{x_k}}{\eta_k}}\\
	\mathtt{Pub^{k}_{ID_u}}= & g^{\frac{H_2\left(\mathtt{Priv^{k}_{ID_u}}\right)}{\mathtt{Priv_{ID_u}}}}\\
	\mathtt{US_{ID_u}}= & (g^{y})^{\mathtt{Priv_{ID_u}}}\cdot g^{\mu_k}= g^{y\cdot \mathtt{Priv_{ID_u}}+ \mu_k}
	\end{align}
	Note that all the system authorities compute the same private key $\mathtt{Priv_{ID_u}}$ for the user $\mathtt{ID_u}$. Hence, the user $\mathtt{ID_u}$ needs to keep only one copy of it.
	
	\subsubsection{Role Assignment}
	\label{role_assignment}
	In this phase, a role-manager assigns roles to a legitimate user. Suppose the role-manager $\mathtt{RM_{r^k_x}}$ wants to assign a role $r^k_x$ to the user $\mathtt{ID_u}$. To do so, $\mathtt{RM_{r^k_x}}$ computes two role-keys $(\mathtt{RK^{1, u}_{r^k_x}}, \mathtt{RK^{2, u}_{r^k_x}})$ for the user $\mathtt{ID_u}$ and sends the role-keys to the user $\mathtt{ID_u}$ using a secure-channel. This phase comprises the \textsc{UserRoleKeyGen} algorithm which is described next.
	\paragraph{\textsc{UserRoleKeyGen} $((\mathtt{RK^{1, u}_{r^k_x}}, \mathtt{RK^{2, u}_{r^k_x}})\leftarrow (\mathtt{PP}, \mathtt{US_{ID_u}}, \mathtt{RS_{r^k_x}}, t_{r^k_x}))$} Let's say, user $\mathtt{ID_u}$ is assigned with the role $r^k_x$. $\mathtt{RM_{r^k_x}}$ computes the role-keys $\mathtt{RK^{1, u}_{r^k_x}}$ and $\mathtt{RK^{2, u}_{r^k_x}}$ as follows:
	
	\begin{align}
	\mathtt{RK^{1, u}_{r^k_x}}= &  \left(\mathtt{US_{ID_u}}\right)^{\frac{1}{\mathtt{RS_{r^k_x}}}}= g^{\frac{y\cdot \mathtt{Priv_{ID_u}}+ \mu_k}{\prod_{r^k_j\in \mathbb{R}_{r^k_x}}t_{r^k_j}}}\\
	\mathtt{RK^{2, u}_{r^k_x}}= & \left(\mathtt{US_{ID_u}}\right)^{\frac{1}{t_{r^k_x}}}= g^{\frac{y\cdot \mathtt{Priv_{ID_u}}+ \mu_k}{t_{r^k_x}}}
	\end{align}
	
	\subsubsection{Data Encryption}
	\label{encryption}
	In this phase, the owner encrypts the plaintext data and then outsources the encrypted data to the cloud storage servers. The owner first encrypts the plaintext data using a random symmetric key by following a secure symmetric key encryption algorithm (e.g., Advanced Encryption Standards). The owner then chooses a set of keywords associated with the actual plaintext data and encrypts the chosen keywords along with the symmetric key using our proposed \textsc{ENC} algorithm. Finally, the owner combines both ciphertexts (i.e., symmetric key and actual plaintext data components) into one archive and outsources the archive file to the public cloud. The \textsc{ENC} algorithm is defined as follows:
	
	\paragraph{\textsc{Enc} $\Big(\mathbb{CT}\leftarrow \left(\mathtt{PP}, \mathtt{Pub^{1k}_{c}}, \mathtt{Pub^{2k}_c}, \mathtt{M}, \mathbb{W}, \Gamma, \Gamma_{\Phi}\right)\Big)$} Let an owner of the $k^{th}$ organization wants to share a plaintext message $\mathtt{M}$ with the users who possess access rights for the roles in $\Gamma$. Let $w$ be a keyword from the keyword space $\mathbb{W}$. First, the owner chooses a random number $\mathtt{K}\in \mathbb{G}_T$ and encrypts the plaintext message $\mathtt{M}$ using $\mathtt{K}$ by following a symmetric key encryption algorithm. Afterward, the owner encrypts the random number $\mathtt{K}$ along with the keyword $w$ using the role public parameters of the roles in $\Gamma$. 
	\par 
	The owner chooses random numbers $(\{d_{r^k_i}, d' _{r^k_i}\}_{\forall r^k_i\in \Gamma})\in \mathbb{Z}_q^*$, where $d_i= \sum_{r^k_i\in \Gamma} d_{r^k_i}$, $d_j= \sum_{r^k_i\in \Gamma} d'_{r^k_i}$ and $d= d_i+ d_j$. The owner also computes $\{d_k= \sum_{\forall r^k_i\in (\Gamma \cap \Psi_k)}d_{r^k_i}\}_{\forall k\in \Gamma_{\Phi}}$ and $\{d'_k= \sum_{\forall r^k_i\in (\Gamma \cap \Psi_k)}d'_{r^k_i}\}_{\forall k\in \Gamma_{\Phi}}$. Finally, the owner generates a ciphertext $\mathbb{CT}= \left<\mathtt{Enc_K(M)}, C_1, C_2, C_3, \{C_{4k}, C'_{4k}\}_{\forall k\in \Gamma_{\Phi}}, \{C_{r^k_i}, C'_{r^k_i}\}_{\forall r^k_i\in \Gamma}, \Gamma, \Gamma_{\Phi}\right>$ for the plaintext message $\mathtt{M}$, where:
	\begin{align}
	C_1= & \mathtt{K}\cdot Y^d= \mathtt{K}\cdot \hat{e}(g, g)^{y\cdot d}\\
	C_2= & (h^k_1)^{d_j}= g^{\eta_k\cdot d_j}\\
	C_3= & (\mathtt{Pub^{2k}_c})^{d_j}= g^{\mathtt{x_k}\cdot \mathtt{Priv^k_c}\cdot d_j}\\
	C_{4k}= &\left(\mathtt{Pub^{1k}_c}\right)^{d_k}= g^{\mu_k\cdot \mathtt{Priv^k_c}\cdot d_k}\\
	C'_{4k}= &\left(\mathtt{Pub^{1k}_c}\right)^{d'_k}= g^{\mu_k\cdot \mathtt{Priv^k_c}\cdot d'_k}\\
	C_{r^k_i}= & \left(\mathtt{PK_{r^k_i}}\right)^{d_{r^k_i}\cdot H_1(w)}= g^{H_1(w)\cdot d_{r^k_i}\prod_{r^k_j\in \mathbb{R}_{r^k_i}}t_{r^k_j}}\\
	C'_{r^k_i}= & \left(\mathtt{PK_{r^k_i}}\right)^{d'_{r^k_i}\cdot H_1(w)}= g^{H_1(w)\cdot d'_{r^k_i}\prod_{r^k_j\in \mathbb{R}_{r^k_i}}t_{r^k_j}}
	\end{align}
	Note that the data owner embeds the hashed value of the keyword, i.e, $H_1(w)$ for some roles in $\Gamma$ only (this fixed position can be seen as part of the public parameter). 
	
	\subsubsection{Trapdoor Generation}
	\label{trapdoor_generation}
	In this phase, a user generates trapdoor $\mathtt{Trap}$ using his/her secret keys and the keywords of his/her choice for delegating keyword search capabilities to the public cloud. The user sends the trapdoor $\mathtt{Trap}$ along with the associated roles to the public cloud using a secure-channel. This phase comprises the \textsc{TrapGen} algorithm which is described next.
	\paragraph{\textsc{TrapGen} $(\left(\mathtt{Trap}, v\right)\leftarrow (\{\mathtt{RK^{1, u}_{r^k_x}}, \mathtt{RK^{2, u}_{r^k_x}}\}_{\forall r^k_x\in \mathbb{S}_{\mathtt{ID_u}}}, \mathtt{SK^k_{ID_u}}, \mathbb{S}_{\mathtt{ID_u}}, w))$} Suppose the user $\mathtt{ID_u}$ who possesses roles $\mathbb{S}_{\mathtt{ID_u}}$ wants to access the ciphertexts associated with the keyword $w$ of the $k^{th}$ organization. User $\mathtt{ID_u}$ chooses a random secret $v\in \mathbb{Z}_q^*$, current timestamp $ts$, and then he computes a trapdoor $\mathtt{Trap}=\left<tr_1, tr_2, tr_3, tr_4, \{tr^1_{r^k_x}, tr^2_{r^k_x}\}_{\forall r^k_x\in \mathbb{S}_{\mathtt{ID_u}}}, \mathbb{S}_{\mathtt{ID_u}}, ts\right>$, where:
	\begin{align}
	tr_1= & \left[\frac{\mathtt{Priv_{ID_u}}+ ts}{H_2\left(\mathtt{Priv^{k}_{ID_u}}\right)}\right] v= \frac{\left[\mathtt{Priv_{ID_u}}+ ts\right] v}{H_2\left(\mathtt{Priv^{k}_{ID_u}}\right)}\\
	tr_2= & \left(\mathtt{Priv^{k}_{ID_u}}\right)^{v}= g^{\frac{\left[y\cdot \mathtt{Priv_{ID_u}}+ \mathtt{x_k}\right]\cdot v}{\eta_k}}\\
	tr_3= & g^{\frac{v}{\mathtt{Priv_{ID_u}}}}\\
	tr_4= & g^v\\
	tr^1_{r^k_x}= & \left(\mathtt{RK^{1, u}_{r^k_x}}\right)^{\frac{v}{H_1(w)}}= g^{\frac{\left[y\cdot \mathtt{Priv_{ID_u}}+ \mu_k\right]v}{H_1(w)\cdot \prod_{r^k_j\in \mathbb{R}_{r^k_x}}t_{r^k_j}}}\\
	tr^2_{r^k_x}= & \left(\mathtt{RK^{2, u}_{r^k_x}}\right)^{\frac{v}{H_1(w)}}
	= g^{\frac{\left[y\cdot \mathtt{Priv_{ID_u}}+ \mu_k\right]v}{H_1(w)\cdot t_{r^k_x}}}
	\end{align}
	The user $\mathtt{ID_u}$ keeps the random secret $v$ in a secure place for decryption of the ciphertexts in Section \ref{decryption}.
	\subsubsection{Data Search}
	\label{search}
	In this phase, the public cloud performs a keyword search operation on the ciphertexts using the trapdoor received from the requested user $\mathtt{ID_u}$. This phase consists of the \textsc{Authentication}, \textsc{KeySearch} and \textsc{PartialDec} algorithms. In the \textsc{Authentication} algorithm, the public cloud authenticates the user and checks freshness of the keyword search request. In the \textsc{KeySearch} algorithm, the public cloud performs all the search related operation for finding the ciphertexts which have a matching keyword with the trapdoor received from the user $\mathtt{ID_u}$. This will be done if and only if the user is legitimate and the keyword search request is valid. In the \textsc{PartialDec} algorithm, the public cloud partially decrypts the ciphertexts and finally sends the partially decrypted ciphertexts to the user $\mathtt{ID_u}$. The details of these algorithms are given next.
	\paragraph{\textsc{Authentication} $\Big(\left(V^1_3/\perp\right)\leftarrow \Big(\{\mathtt{Priv^k_c}\}_{\mathtt{\forall k\in \Gamma_{\Phi}}}, \mathtt{Trap}, \mathtt{Pub^k_{ID_u}}, \mathtt{ID_u}, ts'\Big)\Big)$}
	\label{authetication}
	Before performing computationally expensive operations, the public cloud first authenticates the requesting user. During the authentication process, the public cloud also checks the freshness of search request by comparing the timestamp $ts$ associated with the trapdoor to its own current timestamp $ts'$ for preventing replay attacks. If the authentication fails or if the timestamp associated with the trapdoor represents a past time, the public cloud aborts the connection, i.e., returns $\perp$. Otherwise, it performs keyword search operations defined in the \textsc{KeySearch} algorithm. To authenticate the user and check the freshness of the request, the public cloud computes $U', V^1_1, V^2_1$ and $V^3_1$, based on its known  $\{\mathtt{Priv^k_c}\}_{\forall k\in \Gamma_{\Phi}}$) keys, where:
	\begin{align}
	U'= & \prod_{\forall k\in \Gamma_{\Phi}} \left(C'_{4k}\right)^{\frac{1}{\mathtt{Priv^k_c}}}\\
	= & \prod_{\forall k\in \Gamma_{\Phi}}\left(g^{\mu_k\cdot \mathtt{Priv^k_c}\cdot d_k'}\right)^{\frac{1}{\mathtt{Priv^k_c}}}\\
	= & g^{\sum_{\forall k\in \Gamma_{\Phi}}\mu_k\cdot d'_k}
	\end{align}
	\begin{align}
	V^1_1= & \hat{e}\left(\left(\mathtt{Pub^k_{ID_u}}\right)^{tr_1}, U'\right)\\
	= & \hat{e}\left(\left(g^{\frac{H_2\left(\mathtt{Priv^{k}_{ID_u}}\right)}{\mathtt{Priv_{ID_u}}}}\right)^{\frac{\left[\mathtt{Priv_{ID_u}}+ ts\right] v}{H_2\left(\mathtt{Priv^{k}_{ID_u}}\right)}}, g^{\sum_{\forall k\in \Gamma_{\Phi}}\mu_k\cdot d'_k}\right)\\
	= & \hat{e}\left(g^{v}, g^{\sum_{\forall k\in \Gamma_{\Phi}}\mu_k\cdot d'_k}\right)\cdot \hat{e}\left(g^{\frac{v\cdot ts}{\mathtt{Priv_{ID_u}}}}, g^{\sum_{\forall k\in \Gamma_{\Phi}}\mu_k\cdot d'_k}\right)\\
	= & \hat{e}\left(g, g\right)^{v\sum_{\forall k\in \Gamma_{\Phi}}\mu_k\cdot d'_k}\cdot \hat{e}\left(g, g\right)^{\frac{v\cdot ts \sum_{\forall k\in \Gamma_{\Phi}}\mu_k\cdot d'_k}{\mathtt{Priv_{ID_u}}}}
	\end{align}
	\begin{align}
	V^2_1= & \hat{e}\left((tr_3)^{ts}, U'\right)\\
	= & \hat{e}\left(g^{\frac{v\cdot ts}{\mathtt{Priv_{ID_u}}}}, g^{\sum_{\forall k\in \Gamma_{\Phi}}\mu_k\cdot d'_k}\right)\\
	= & \hat{e}\left(g, g\right)^{\frac{v\cdot ts\sum_{\forall k\in \Gamma_{\Phi}}\mu_k\cdot d'_k}{\mathtt{Priv_{ID_u}}}}
	\end{align}
	\begin{align}
	V^3_1= & \hat{e}\left(tr_4, U'\right) \\
	= & \hat{e}\left(g^v, g^{\sum_{\forall k\in \Gamma_{\Phi}}\mu_k\cdot d'_k}\right)\\
	= & \hat{e}\left(g, g\right)^{v\sum_{\forall k\in \Gamma_{\Phi}}\mu_k\cdot d'_k}
	\end{align}
	Now, the public cloud checks whether $V^1_1\stackrel{?}{=}V^2_1\cdot V^3_1$. If the equation holds, the public cloud performs the operations defined in the \textsc{KeySearch} algorithm. Otherwise, it aborts the connection. 
	\par 
	\emph{Proof of consistency}:
	\begin{align}
	V^1_1= & \hat{e}\left(g, g\right)^{v\sum_{\forall k\in \Gamma_{\Phi}}\mu_k\cdot d'_k}\cdot \hat{e}\left(g, g\right)^{\frac{v\cdot ts \sum_{\forall k\in \Gamma_{\Phi}}\mu_k\cdot d'_k}{\mathtt{Priv_{ID_u}}}}\\
	= & V^3_1\cdot V^2_1
	\end{align}
	
	\paragraph{\textsc{KeySearch} $((\mathbb{CT}/\perp)\leftarrow (\mathbb{CT}, \mathtt{Trap}, V^3_1))$} Suppose the user $\mathtt{ID_u}$ possesses a role set $\mathbb{S}_{\mathtt{ID_u}}$ and wants to access the $k^{th}$ organization's data. Suppose $\mathbb{CT}= \big<\mathtt{Enc_K(M)}, C_1, C_2, C_3, \{C_{4k}, C'_{4k}\}_{\forall k\in \Gamma_{\Phi}}, \{C_{r^k_i}, C'_{r^k_i}\}_{\forall r^k_i\in \Gamma}, \Gamma, \\\Gamma_{\Phi}\big>$ is the ciphertext of the $k^{th}$ organization on which the public cloud wants to perform the keyword search operation, where for all $r^k_i\in \Gamma$, there is at least one $r^k_x \in \mathbb{S}_{\mathtt{ID_u}}$ such that $r^k_x\in \mathbb{R}_{r^k_i}$.
	
	The public cloud computes $V^1_{r^k_x}$ and $V_2$. While computing $V^1_{r^k_x}$, two cases are considered which are as follows: 
	\par 
	Case 1: if $r^k_x== r^k_i$, then
	\begin{align}
	V^1_{r^k_x}= & \hat{e}\left(tr^1_{r^k_x}, C'_{r^k_i}\right)\\
	= & \hat{e}\left(g^{\frac{\left[y\cdot \mathtt{Priv_{ID_u}}+ \mu_k\right]v}{H_1(w)\cdot \prod_{r^k_j\in \mathbb{R}_{r^k_x}}t_{r^k_j}}}, g^{H_1(w)\cdot d'_{r^k_i}\prod_{r^k_j\in \mathbb{R}_{r^k_i}}t_{r^k_j}}\right)\\
	= & \hat{e}\left(g, g\right)^{\left[y\cdot \mathtt{Priv_{ID_u}}+ \mu_k\right]\cdot v\cdot d'_{r^k_i}}, \text{\Big(as $\mathbb{R}_{r^k_x}= \mathbb{R}_{r^k_i}$\Big)}\\
	= & \hat{e}\left(g, g\right)^{\left[y\cdot \mathtt{Priv_{ID_u}}\right]v\cdot d'_{r^k_i}}\cdot \hat{e}\left(g, g\right)^{\mu_k\cdot v\cdot d'_{r^k_i}}
	\end{align} 
	Otherwise, Case 2: if $r^k_x\in \left(\mathbb{R}_{r^k_i}\setminus \{r^k_i\}\right)$ (let $\gamma= \left[y\cdot \mathtt{Priv_{ID_u}}+ \mu_k\right]$)
	\begin{align}
	V^1_{r^k_x}= & \hat{e}\left(\left(tr^2_{r^k_x}\right)^{\mathtt{PKey^{r^k_x}_{r^k_i}}}, C'_{r^k_i}\right)\\
	= & \hat{e}\left(g^{\frac{\gamma \cdot v}{H_1(w)\cdot t_{r^k_x}}\cdot \frac{1}{\prod_{\forall r^k_j\in \mathbb{R}_{r^k_i}\setminus \{r^k_x\}} t_{r^k_j}}}, g^{H_1(w)\cdot d'_{r^k_i}\prod_{r^k_j\in \mathbb{R}_{r^k_i}}t_{r^k_j}}\right)\\
	= & \hat{e}\left(g^{\frac{\gamma\cdot v}{\prod_{\forall r^k_j\in \mathbb{R}_{r^k_i}}t_{r^k_j}}}, g^{d'_{r^k_i}\prod_{r^k_j\in \mathbb{R}_{r^k_i}}t_{r^k_j}}\right)\\
	= & \hat{e}\left(g, g\right)^{\gamma\cdot v\cdot d'_{r^k_i}}\\
	= & \hat{e}\left(g, g\right)^{\left[y\cdot \mathtt{Priv_{ID_u}}+ \mu_k\right] v\cdot d'_{r^k_i}}\\
	= & \hat{e}\left(g, g\right)^{\left[y\cdot \mathtt{Priv_{ID_u}}\right]v\cdot d'_{r^k_i}}\cdot \hat{e}\left(g, g\right)^{\mu_k\cdot v\cdot d'_{r^k_i}}
	\end{align}
	\begin{align}
	V_2= & \prod V^1_{r^k_x}\\
	= & \hat{e}\left(g, g\right)^{\left[y\cdot \mathtt{Priv_{ID_u}}\right]v\cdot \sum d'_{r^k_i}}\cdot \hat{e}\left(g, g\right)^{v\sum \mu_k \cdot d'_{r^k_i}}\\
	= & \hat{e}\left(g, g\right)^{\left[y\cdot \mathtt{Priv_{ID_u}}\right]v\cdot d_j}\cdot \hat{e}\left(g, g\right)^{v\sum \mu_k \cdot d'_{r^k_i}}
	\end{align}
	
	Now, the public cloud computes $V_3$, where
	\begin{align}\label{com1}
	V_3= & \frac{V_2}{V^3_1}= \frac{\hat{e}\left(g, g\right)^{\left[y\cdot \mathtt{Priv_{ID_u}}\right]v\cdot d_j}\cdot \hat{e}\left(g, g\right)^{v\sum \mu_k \cdot d'_{r^k_i}}}{\hat{e}\left(g, g\right)^{v\sum\mu_k\cdot d'_k}}\\
	= & \hat{e}\left(g, g\right)^{y\cdot \mathtt{Priv_{ID_u}}\cdot d_j\cdot v}
	\end{align}
	Note that $\hat{e}\left(g, g\right)^{v\sum \mu_k \cdot d_{r^k_i}}= \hat{e}\left(g, g\right)^{v\sum\mu_k\cdot d'_k}$ (Please refer Section \ref{encryption}).
	\par
	Afterward, the public cloud computes $V_4, V_5$ and $V_6$, where
	
	\begin{align}
	V_4= & \hat{e}\left(tr_2, C_2\right)\\
	= & \hat{e}\left(g^{\frac{\left[y\cdot \mathtt{Priv_{ID_u}}+ \mathtt{x_k}\right]\cdot v}{\eta_k}}, g^{\eta_k\cdot d_j}\right)\\
	= & \hat{e}\left(g, g\right)^{\left[y\cdot \mathtt{Priv_{ID_u}}+ \mathtt{x_k}\right]\cdot v\cdot d_j}\\
	= & \hat{e}\left(g, g\right)^{y\cdot \mathtt{Priv_{ID_u}}\cdot v\cdot d_j}\cdot \hat{e}\left(g, g\right)^{\mathtt{x_k}\cdot v\cdot d_j}
	\end{align}
	\begin{align}
	V_5= & \hat{e}\left((tr_4)^{\frac{1}{\mathtt{Priv^k_c}}}, C_3\right)\\
	= & \hat{e}\left(g^{\frac{v}{\mathtt{Priv^k_c}}}, g^{\mathtt{x_k}\cdot \mathtt{Priv^k_c}\cdot d_j}\right)\\
	= & \hat{e}\left(g, g\right)^{\mathtt{x_k}\cdot v\cdot d_j}
	\end{align}
	\begin{align}\label{com2}
	V_6= & \frac{V_1}{V_2}\\
	= & \frac{\hat{e}\left(g, g\right)^{y\cdot \mathtt{Priv_{ID_u}}\cdot v\cdot d_j}\cdot \hat{e}\left(g, g\right)^{\mathtt{x_k}\cdot v\cdot d_j}}{\hat{e}\left(g, g\right)^{\mathtt{x_k}\cdot v\cdot d_j}}\\
	= & \hat{e}\left(g, g\right)^{y\cdot \mathtt{Priv_{ID_u}}\cdot d_j\cdot v}
	\end{align}
	Finally, the public cloud compares the equations (\ref{com1}) and (\ref{com2}). If both are equal then it performs the operations defined in the \textsc{PartialDec} algorithm (described in Section \ref{partial_dec}). Otherwise, it aborts all the operations and outputs $\perp$, which means that the ciphertext does not have the desired keyword.
	
	\paragraph{\textsc{PartialDec} $\left(\mathbb{CT}'\leftarrow \left(\mathbb{CT}, \mathtt{Trap}, \{\mathtt{Priv^k_c}\}_{\forall k\in \Gamma_{\Phi}}, \mathbb{S}_{\mathtt{ID_u}}\right)\right)$}
	\label{partial_dec}
	In this algorithm, the public cloud partially decrypts all the ciphertexts returned by the \textsc{KeySearch} algorithm. Suppose ciphertext $\mathbb{CT}= \left<\mathtt{Enc_K(M)}, C_1, C_2, C_3, \{C_{4k}, C'_{4k}\}_{\forall k\in \Gamma_{\Phi}}, \{C_{r^k_i}, C'_{r^k_i}\}_{\forall r^k_i\in \Gamma}, \Gamma, \Gamma_{\Phi}\right>$ has a matching keyword with the trapdoor $\mathtt{Trap}$. To partially decrypt the ciphertext $\mathbb{CT}$, the public cloud first computes $V^7_{r^k_x}$ and $V_7$. Similar to $V^1_{r^k_x}$, the computation procedure considers the two following cases to compute $V^7_{r^k_x}$: 
	\par  
	Case 1: if $r^k_x== r^k_i$, then
	\begin{align}
	V^7_{r^k_x}= & \hat{e}\left(tr^1_{r^k_x}, C_{r^k_i}\right)\\
	= & \hat{e}\left(g^{\frac{\left[y\cdot \mathtt{Priv_{ID_u}}+ \mu_k\right]v}{H_1(w)\cdot \prod_{r^k_j\in \mathbb{R}_{r^k_x}}t_{r^k_j}}}, g^{H_1(w)\cdot d_{r^k_i}\prod_{r^k_j\in \mathbb{R}_{r^k_i}}t_{r^k_j}}\right)\\
	= & \hat{e}\left(g, g\right)^{\left[y\cdot \mathtt{Priv_{ID_u}}+ \mu_k\right]\cdot v\cdot d_{r^k_i}}, \text{\Big(as $\mathbb{R}_{r^k_x}= \mathbb{R}_{r^k_i}$\Big)}\\
	= & \hat{e}\left(g, g\right)^{\left[y\cdot \mathtt{Priv_{ID_u}}\right]v\cdot d_{r^k_i}}\cdot \hat{e}\left(g, g\right)^{\mu_k\cdot v\cdot d_{r^k_i}}
	\end{align} 
	Otherwise, Case 2: if $r^k_x\in \left(\mathbb{R}_{r^k_i}\setminus \{r^k_i\}\right)$ (let $\gamma= \left[y\cdot \mathtt{Priv_{ID_u}}+ \mu_k\right]$)
	\begin{align}
	V^7_{r^k_x}= & \hat{e}\left(\left(tr^2_{r^k_x}\right)^{\mathtt{PKey^{r^k_x}_{r^k_i}}}, C_{r^k_i}\right)\\
	= & \hat{e}\left(g^{\frac{\gamma \cdot v}{H_1(w)\cdot t_{r^k_x}}\cdot \frac{1}{\prod_{\forall r^k_j\in \mathbb{R}_{r^k_i}\setminus \{r^k_x\}} t_{r^k_j}}}, g^{H_1(w)\cdot d_{r^k_i}\prod_{r^k_j\in \mathbb{R}_{r^k_i}}t_{r^k_j}}\right)\\
	= & \hat{e}\left(g^{\frac{\gamma\cdot v}{\prod_{\forall r^k_j\in \mathbb{R}_{r^k_i}}t_{r^k_j}}}, g^{d_{r^k_i}\prod_{r^k_j\in \mathbb{R}_{r^k_i}}t_{r^k_j}}\right)\\
	= & \hat{e}\left(g, g\right)^{\gamma\cdot v\cdot d_{r^k_i}}\\
	= & \hat{e}\left(g, g\right)^{\left[y\cdot \mathtt{Priv_{ID_u}}+ \mu_k\right] v\cdot d_{r^k_i}}\\
	= & \hat{e}\left(g, g\right)^{\left[y\cdot \mathtt{Priv_{ID_u}}\right]v\cdot d_{r^k_i}}\cdot \hat{e}\left(g, g\right)^{\mu_k\cdot v\cdot d_{r^k_i}}
	\end{align}
	\begin{align}
	V_7= & \prod V^7_{r^k_x}\\
	= & \hat{e}\left(g, g\right)^{\left[y\cdot \mathtt{Priv_{ID_u}}\right]v\cdot \sum d_{r^k_i}}\cdot \hat{e}\left(g, g\right)^{v\sum \mu_k \cdot d_{r^k_i}}\\
	= & \hat{e}\left(g, g\right)^{\left[y\cdot \mathtt{Priv_{ID_u}}\right]v\cdot d_i}\cdot \hat{e}\left(g, g\right)^{v\sum \mu_k \cdot d_{r^k_i}}
	\end{align}
	The public cloud knowing its private key $\{\mathtt{Priv^k_c}\}_{\forall k\in \Gamma_{\Phi}}$ computes $U$ and $V_8$, as follows:
	\begin{align}
	U= & \prod_{\forall k\in \Gamma_{\Phi}} \left(C_{4k}\right)^{\frac{1}{\mathtt{Priv^k_c}}}\\
	= & \prod_{\forall k\in \Gamma_{\Phi}}\left(g^{\mu_k\cdot \mathtt{Priv^k_c}\cdot d_k}\right)^{\frac{1}{\mathtt{Priv^k_c}}}\\
	= & g^{\sum_{\forall k\in \Gamma_{\Phi}}\mu_k\cdot d_k}\\
	V_8= &
	\hat{e}\left(tr_4, U\right)\\
	= & \hat{e}\left(g^v, g^{\sum_{\forall k\in \Gamma_{\Phi}}\mu_k\cdot d_k}\right)\\
	= & \hat{e}\left(g, g\right)^{v\sum_{\forall k\in \Gamma_{\Phi}}\mu_k\cdot d_k}
	\end{align}
	Now, the public cloud computes $V_9$, where:
	\begin{align}
	V_9= & \frac{V_7}{V_8}= \frac{\hat{e}\left(g, g\right)^{\left[y\cdot \mathtt{Priv_{ID_u}}\right]v\cdot d_i}\cdot \hat{e}\left(g, g\right)^{v\sum \mu_k \cdot d_{r^k_i}}}{\hat{e}\left(g, g\right)^{v\sum\mu_k\cdot d_k}}\\
	= & \hat{e}\left(g, g\right)^{y\cdot \mathtt{Priv_{ID_u}}\cdot d_i\cdot v}
	\end{align}
	Note that $\hat{e}\left(g, g\right)^{v\sum \mu_k \cdot d_{r^k_i}}= \hat{e}\left(g, g\right)^{v\sum\mu_k\cdot d_k}$ (Please refer Section \ref{encryption}).
	
	The public cloud computes $V_9$, where:
	\begin{align}
	V_{10}= & V_6\cdot V_9\\
	= & \hat{e}\left(g, g\right)^{y\cdot \mathtt{Priv_{ID_u}}\cdot d_j\cdot v}\cdot \hat{e}\left(g, g\right)^{y\cdot \mathtt{Priv_{ID_u}}\cdot d_i\cdot v}\\
	= & \hat{e}\left(g, g\right)^{y\cdot \mathtt{Priv_{ID_u}}\cdot v\left[d_j+ d_i\right]}\\
	= & \hat{e}\left(g, g\right)^{y\cdot \mathtt{Priv_{ID_u}}\cdot v\cdot d}
	\end{align}
	Finally, the public cloud sends the partially decrypted ciphertext $\mathbb{CT}'= \left<\mathtt{Enc_K(M)}, C_1, V_{10}\right>$ to the user $\mathtt{ID_u}$.
	
	\subsubsection{Decryption}
	\label{decryption}
	In this phase, the user $\mathtt{ID_u}$ decrypts the received partially decrypted ciphertext $\mathbb{CT}'$ using his/her private key $\mathtt{Priv_{ID_u}}$ and random secret $v$. This phase comprises the \textsc{FullDEC} algorithm which is described next.
	\paragraph{\textsc{FullDEC}$(\mathtt{M}\leftarrow (\mathbb{CT}', \mathtt{Priv_{ID_u}}, v))$} It computes $\mathtt{K}$ from the ciphertext $\mathbb{CT}'$ using his/her secret keys, $\mathtt{priv_{ID_u}}$ and $v$.
	\begin{align}
	\mathtt{K}= &\frac{C_1}{\left(V_{10}\right)^{\frac{1}{\mathtt{Priv_{ID_u}}\cdot v}}}\\
	= & \frac{\mathtt{K}\cdot \hat{e}\left(g, g\right)^{y\cdot d}}{\left(\hat{e}\left(g, g\right)^{y\cdot \mathtt{Priv_{ID_u}}\cdot v\cdot d}\right)^{\frac{1}{\mathtt{Priv_{ID_u}}\cdot v}}}\\
	= & \frac{\mathtt{K}\cdot \hat{e}\left(g, g\right)^{y\cdot d}}{\hat{e}\left(g, g\right)^{y\cdot d}}
	\end{align}
	Finally, user $\mathtt{ID_u}$ gets the actual plaintext data by decrypting $\mathtt{Enc_{K}(M)}$ using $\mathtt{K}$ and removes the random secret $v$ from his/her database.
	
	\subsection{Conjunctive Keyword Search}
	\label{conjuctive_keyword_search}
	Many times a user wants to perform multiple keyword search using a single search request instead of sending multiple single keyword search requests. This property is called the \emph{Conjunctive Keyword Search}. The proposed scheme can provide conjunctive keyword search with the following modifications. The owner computes modified ciphertext components $C_{r^k_i}= \left(\mathtt{PK_{r^k_i}}\right)^{d_{r^k_i}\cdot \prod H_1(w_i)}= g^{d_{r^k_i}\prod H_1(w_i)\cdot \prod_{r^k_j\in \mathbb{R}_{r^k_i}}t_{r^k_j}}$ and $C'_{r^k_i}=  \left(\mathtt{PK_{r^k_i}}\right)^{d'_{r^k_i}\cdot \prod H_1(w_i)}= g^{d'_{r^k_i}\prod H_1(w_i)\cdot \prod_{r^k_j\in \mathbb{R}_{r^k_i}}t_{r^k_j}}$. Similarly, a user computes trapdoor components $tr^1_{r^k_x}= \left(\mathtt{RK^{1, u}_{r^k_x}}\right)^{\frac{v}{\prod H_1(w_i)}}$ and $tr^2_{r^k_x}= \left(\mathtt{RK^{2, u}_{r^k_x}}\right)^{\frac{v}{\prod H_1(w_i)}}$. It can be observed that, our conjunctive keyword search mechanism does not introduce any additional overhead in the system.

	\subsection{Revocation}
	\label{revocation}
	In the proposed scheme, a SA can revoke a user in two ways, namely \emph{complete user revocation} and \emph{role-level revocation}. The former revocation method means that the user can no longer access any data belonging to that organization. The later revocation method represents that if one or more roles of a user is revoked, the user can still access data with his/her non-revoked roles if they are qualified enough according to the RBAC access policy.
	\par  
	The complete user revocation is achieved by revoking the public key $\mathtt{Pub^k_{ID_u}}$ of the user, so that the public cloud do not use it during the authentication process in the \textsc{Authentication} algorithm defined in Section \ref{authetication}. To do that, SA removes the pubic key $\mathtt{Pub^k_{ID_u}}$ of the revoked user $\mathtt{ID_u}$ from its public bulletin board, which can be done easily.
	\par 
	For the role-level revocation, the SA updates all the parameters related with the revoked role. Suppose the SA wants to revoke a role $r^k_i$ from one or more users. To do that, the SA first chooses a fresh random number $t'_{r^k_i}\in \mathbb{Z}_q^*$ and updates all the parameters related with the revoked role $r^k_i$. The SA computes updated public keys $\left(\mathtt{PK_{r^k_j}}\right)^{\frac{t'_{r^k_i}}{t_{r^k_i}}}$, role secrets $\left(\mathtt{RS_{r^k_j}}\cdot \frac{t'_{r^k_i}}{t_{r^k_i}}\right)$ and proxy re-encryption keys $\left(\mathtt{PKey^{r^k_w}_{r^k_j}}\cdot \frac{t'_{r^k_i}}{t_{r^k_i}}\right)$ related with the revoked role $r^k_i$ (i.e., for all $r^k_j$ such that $r^k_i\in \mathbb{R}_{r^k_j}$), where $t_{r^k_i}$ is the previously chosen random number associated with $r^k_i$. The SA then sends the $\frac{t'_{r^k_i}}{t_{r^k_i}}$ to the public cloud for re-encryption of the stored ciphertexts associated with the revoked role $r^k_i$. It also sends $\frac{t'_{r^k_i}}{t_{r^k_i}}$ to the corresponding role-managers for updating the role-keys associated with the revoked role $r^k_i$.
	\par 
	The public cloud re-encrypts the ciphertext components $\left(C_{r^k_j}\right)^{\frac{t'_{r^k_i}}{t_{r^k_i}}}$ and $\left(C'_{r^k_j}\right)^{\frac{t'_{r^k_i}}{t_{r^k_i}}}$ for all $r^k_j$ such that $r^k_i\in \mathbb{R}_{r^k_j}$. This is essential to prevent the revoked users from accessing the data using the revoked role (i.e., \emph{Backward Secrecy}).
	\par 
	Moreover, to enable the other non-revoked users for accessing the re-encrypted ciphertexts, the concerned role-managers need to send updated role-keys to the non-revoked users (i.e., \emph{Forward Secrecy}). The updated role-keys are computed as follows: i) $\left(\mathtt{RK^{1, u}_{r^k_i}}\right)^{\frac{t_{r^k_i}}{t'_{r^k_i}}}$ for all the non-revoked users who possess $r^k_i$ and ii) $\left(\mathtt{RK^{2, u}_{r^k_j}}\right)^{\frac{t_{r^k_i}}{t'_{r^k_i}}}$ for all the non-revoked users who possess $r^k_j$, such that $r^k_i\in \mathbb{R}_{r^k_j}$.

	\section{Analysis}
	\label{analysis}
	This section first presents security analysis of the proposed scheme, followed by its performance analysis. In the security analysis, we demonstrate that the proposed scheme is secure against chosen plaintext and chosen keyword attacks. In the performance analysis, we present a comprehensive performance analysis of the proposed scheme along with its experimental results.
	\subsection{Security Analysis}
	\label{security_analysis}
	\subsubsection{Security against Chosen Plaintext Attack}
	CPA security of the proposed scheme can be defined by the following theorem and proof.
	\begin{theorem}\label{theo1}
		If a probabilistic-polynomial time (PPT) adversary $\mathcal{A}_1$ wins the CPA security game as defined in Section \ref{CPA} with a non-negligible advantage $\epsilon$, then a PPT simulator $\mathcal{B}$ can be constructed to break the DBDH assumption with non-negligible advantage $\frac{\epsilon}{2}$.
	\end{theorem}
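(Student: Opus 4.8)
The plan is to construct a PPT simulator $\mathcal{B}$ that uses an adversary $\mathcal{A}_1$ against the IND-CPA game as a subroutine to decide the DBDH problem. The simulator receives a DBDH challenge tuple $\left<g, g^a, g^b, g^c, Z\right>$, where $Z$ is either $\hat{e}(g,g)^{abc}$ or $\hat{e}(g,g)^z$ for random $z$, and must decide which. The central idea is to embed the challenge terms into the public parameters and the challenge ciphertext so that a correct guess from $\mathcal{A}_1$ is only possible (with better-than-random advantage) when $Z = \hat{e}(g,g)^{abc}$; when $Z$ is random, the challenge ciphertext perfectly hides $\omega$, forcing $\mathcal{A}_1$ to guess blindly.

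**First I would** handle the \textsc{Init} and \textsc{Setup} phases. Since we work in the Selective-ID model, $\mathcal{A}_1$ commits to the challenged role set $\Gamma^*$ before setup, which lets $\mathcal{B}$ program the role public keys around the challenge. I would implicitly set the shared secret exponent so that $Y = \hat{e}(g,g)^y$ is tied to $g^a$ and $g^b$ (for instance embedding $ab$ into the effective value of $y\cdot d$ used in $C_1$), and choose the role-related randomness $\{t_{r^k_i}\}$ so that the public keys $\mathtt{PK_{r^k_i}}$ for roles in $\Gamma^*$ are expressible using $g^c$ while remaining correctly distributed. The master secrets $\eta_k,\mu_k,\mathtt{x_k}$ and the public-cloud and user keys can be simulated by choosing the underlying randomness directly, since $\mathcal{B}$ controls these and only needs the public components to match what the real \textsc{SystemSetup}, \textsc{ManageRole}, \textsc{PubCloudKeyGen} and \textsc{UserPrivKeyGen} algorithms output.

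**The key constraint** that makes the reduction work is the \textsc{Phase 1}/\textsc{Phase 2} restriction: every queried role set $\mathbb{S}^*$ contains some $r^k_x \notin \mathbb{R}_{r^k_i}$ for every $r^k_i \in \Gamma^*$, i.e.\ the adversary never holds a role qualified for the challenge policy. I would exploit this to answer all \textsc{UserRoleKeyGen} queries: for an unqualified role the product $\prod_{r^k_j \in \mathbb{R}_{r^k_x}} t_{r^k_j}$ in the exponent of $\mathtt{RK^{1,u}_{r^k_x}}$ avoids the embedded challenge factor, so $\mathcal{B}$ can compute the role-keys from the group elements it knows without needing the unknown exponents $a,b$. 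In the \textsc{Challenge} phase, $\mathcal{B}$ forms the challenge ciphertext for $\mathtt{K_\omega}$ by setting $C_1 = \mathtt{K_\omega} \cdot Z$ (after the appropriate blinding) and building $C_2, C_3, \{C_{4k},C'_{4k}\}, \{C_{r^k_i},C'_{r^k_i}\}$ using $g^c$ as the encryption randomness $d$; when $Z=\hat{e}(g,g)^{abc}$ this is a well-formed encryption under $\Gamma^*$, and when $Z$ is random it is independent of $\omega$.

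**The hard part will be** simultaneously satisfying two conflicting demands on the role-key simulation: the keys handed to $\mathcal{A}_1$ for unqualified roles must be perfectly distributed and computable from only the known challenge elements, yet the challenge-ciphertext components for qualified roles in $\Gamma^*$ must carry the $g^c$ (i.e.\ the $abc$) term. Threading the embedding so that the inheritance structure $\mathbb{R}_{r^k_i}$ and the proxy re-encryption keys $\mathtt{PKey^{r^k_w}_{r^k_i}}$ remain consistent across the hierarchy — and that the abort-free simulation is statistically indistinguishable from the real game — is the delicate bookkeeping step. Once that is in place, the final accounting is standard: if $\mathcal{A}_1$ achieves advantage $\epsilon$, then $\mathcal{B}$ distinguishes the DBDH cases with advantage $\frac{\epsilon}{2}$, where the factor $\frac{1}{2}$ arises because in the random-$Z$ branch $\mathcal{A}_1$'s output is uncorrelated with $\omega$, contributing exactly $\frac{1}{2}$ to $\mathcal{B}$'s success probability.
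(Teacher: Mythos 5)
Your reduction skeleton does match the paper's: hand the simulator a DBDH instance $(A=g^a,B=g^b,C=g^c,Z)$, exploit the selective-ID commitment to $\Gamma^*$, set $Y=\hat{e}(A,B)=\hat{e}(g,g)^{ab}$, form the challenge as $C_1=\mathtt{K_\omega}\cdot Z$ with encryption randomness tied to $c$, and close with the $\frac{1}{2}\left(\frac{1}{2}+\epsilon+\frac{1}{2}\right)-\frac{1}{2}=\frac{\epsilon}{2}$ accounting. However, the heart of the simulation --- how role-key queries are answered and where $c$ is embedded --- contains two genuine gaps. First, your key-query mechanism is inverted. A role key has the form $\mathtt{RK^{1,u}_{r^k_x}}=g^{(y\cdot\mathtt{Priv_{ID_u}}+\mu_k)/\prod_{r^k_j\in\mathbb{R}_{r^k_x}}t_{r^k_j}}$, and once $y=ab$ the numerator always contains $ab$. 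If the $t$-product of an unqualified role merely ``avoids the embedded challenge factor'' (i.e.\ is a known scalar), the key equals $g^{(ab\cdot s+\mu_k)/\mathrm{known}}$, which requires $g^{ab}$ and is \emph{not} computable from $A$ and $B$; avoidance buys you nothing. What is needed is a cancelling factor of $b$ \emph{inside} the denominator. The paper arranges exactly this, and does so uniformly rather than by partitioning: it sets every role public key to $\mathtt{PK_{r^k_i}}=B^{\prod\alpha_{r^k_j}}$ (so every ancestor product implicitly carries one factor of $b$) and implicitly sets $\mu_k=b\vartheta_k$, whence $\mathtt{RK^{1,u}_{r^k_x}}=g^{(ab\cdot s+b\vartheta_k)/(b\prod\alpha_{r^k_j})}=A^{s/\prod\alpha_{r^k_j}}\cdot g^{\vartheta_k/\prod\alpha_{r^k_j}}$ is computable for \emph{every} role, qualified or not; the paper's key simulation never invokes the unqualified-role restriction at all.

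Second, you place $c$ in two locations simultaneously: in the public keys of the roles in $\Gamma^*$ and as the encryption randomness $d$. Since $C_{r^k_i}=\left(\mathtt{PK_{r^k_i}}\right)^{d_{r^k_i}\cdot H_1(w)}$, this puts $c^2$ in the exponent, and no power of $A$, $B$, $C$ yields such a term --- the challenge ciphertext becomes incomputable. You must pick one placement. The paper keeps all $\mathtt{PK}$'s as powers of $B$ and pushes $c$ only into the implicit secret sharing of the randomness (polynomials $q_1,\dots,q_5$ with $q_1(0)=c$), while $C_1=\mathtt{K_\omega}\cdot Z$ supplies the $abc$ term directly. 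Your alternative instinct --- a GPSW-style partitioning in which challenge-role parameters are free of $a,b$ so the corresponding ciphertext components are powers of $C$, and non-challenge roles carry $b$ for key cancellation --- is a legitimate route and in some ways cleaner than the paper's, but it stands or falls on precisely the bookkeeping you deferred as ``the hard part'': because ancestor sets overlap (the root lies in all of them), you must ensure each unqualified role's ancestor product contains \emph{exactly one} factor of $b$ (two factors leave an incomputable $g^{a/b}$ term), and that no role appearing in the challenge policy's ancestor sets carries any. Without that construction, and with the two issues above, the proposal does not yet constitute a working reduction.
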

	\begin{proof}\label{cpa_proof}
		In this proof, we show that a simulator $\mathcal{B}$ can be constructed to help an adversary $\mathcal{A}_1$ to gain advantage $\frac{\epsilon}{2}$ against our proposed scheme.
		\par 
		The DBDH challenger $\mathcal{C}$ chooses random numbers $(a, b, c, z)\in \mathbb{Z}_q^*$ and flips a binary random coin $l$. It sets $Z= \hat{e}\left(g, g\right)^{abc}$ if $l= 0$ and $Z= \hat{e}\left(g, g\right)^z$ otherwise. Afterwards, challenger $\mathcal{C}$ sends $A= g^a, B= g^b, C= g^c$ and $Z$ to the simulator $\mathcal{B}$, and it asks the simulator $\mathcal{B}$ to output $l$. Now simulator $\mathcal{B}$ acts as a challenger in the rest of the security game.
		\par 
		In the following game, simulator $\mathcal{B}$ interacts with the adversary $\mathcal{A}_1$ as follows:
		\par
		\textbf{\textsc{Init}} Adversary $\mathcal{A}_1$ sends a challenged role set $\Gamma^*$, a keyword $w$ and two identities $(\mathtt{ID_u^*}, \mathtt{ID^*_c})$ to the simulator $\mathcal{B}$.
		\par 
		\textbf{\textsc{Setup}} Simulator $\mathcal{B}$ chooses random numbers $\{\zeta_k, \vartheta_k, \varrho_k\}_{\forall k\in \Phi}\in \mathbb{Z}_q^*$. It also chooses random numbers $\{\alpha_{r^k_i}\}_{\forall i\in \Psi_k, \forall k\in \Phi}\in \mathbb{Z}_q^*$. Simulator $\mathcal{B}$ computes $Y= \hat{e}\left(g, g\right)^{ab}= \hat{e}\left(A, B\right), \{h_1^k= g^{b\cdot \zeta_k}= B^{\zeta_k}\}_{\forall k\in \Phi}$. Simulator $\mathcal{B}$ also computes $\mathtt{PK_{r^k_i}}= g^{b\prod_{\forall r^k_j\in \mathbb{R}_{r^k_i}}\alpha_{r^k_j}}= B^{\prod_{\forall r^k_j\in \mathbb{R}_{r^k_i}}\alpha_{r^k_j}}$ for all $r^k_i\in \Psi_k$, where $1\le k\le m$. Moreover, simulator $\mathcal{B}$ computes $\left\{\left\{\mathtt{PKey^{r^k_w}_{r^k_i}}= \prod_{\forall r^k_j\in \mathbb{R}_{r^k_i}\setminus \{r^k_w\}}\alpha_{r^k_j}\right\}_{\forall r^k_w\in \mathbb{R}_{r^k_i}\setminus \{r^k_i\}}\right\}_{\forall r^k_i\in \Psi_k}$ where $1\le k\le m$.
		\par 
		Simulator $\mathcal{B}$ also chooses a random number $\mathtt{s_{ID^*_u}}\in \mathbb{Z}_q^*$ and computes $h_{id^*_c}= H_1(\mathtt{ID^*_c})$. It then computes $\{\mathtt{Priv^k_c}, \mathtt{Pub^{1k}_c}, \mathtt{Pub^{2k}_c}, \mathtt{Priv^{k}_{ID_u}}, \mathtt{Pub^k_{ID_u}}\}_{\forall k\in \Phi}$ and $\mathtt{Priv_{ID_u}}$, where 
		\begin{align}
		\mathtt{Priv^k_c}=& H_2\big(g^{\frac{a\cdot b\cdot h_{id^*_c}}{b\cdot \varrho_k}}\big)= H_2\big(A^{\frac{h_{id^*_c}}{\varrho_k}}\big)\\
		\mathtt{Pub^{1k}_c}=& g^{b\cdot \vartheta_k\cdot \mathtt{Priv^k_c}}= B^{\vartheta_k\cdot H_2\big(A^{\frac{h_{id^*_c}}{\varrho_k}}\big)}\\
		\mathtt{Pub^{2k}_c}=& g^{b\cdot \varrho_k\cdot \mathtt{Priv^k_c}}= B^{\varrho_k\cdot H_2\big(A^{\frac{h_{id^*_c}}{\varrho_k}}\big)} \\
		\mathtt{Priv^{k}_{ID_u}}=& g^{\frac{a\cdot b\cdot \mathtt{s_{ID^*_u}}+ b\cdot \varrho_k}{b\cdot \zeta_k}}= A^{\frac{\mathtt{s_{ID^*_u}}}{\zeta_k}}\cdot g^{\frac{\varrho_k}{\zeta_k}} \\
		\mathtt{Pub^k_{ID_u}}=& g^{\frac{H_2\left(\mathtt{Priv^{k}_{ID_u}}\right)}{\mathtt{s_{ID^*_u}}}}=g^{\frac{H_2\left(A^{\frac{\mathtt{s_{ID^*_u}}}{\zeta_k}}\cdot g^{\frac{\varrho_k}{\zeta_k}}\right)}{\mathtt{s_{ID^*_u}}}}\\
		\mathtt{Priv_{ID_u}}=& \mathtt{s_{ID^*_u}}
		\end{align}
		Finally, simulator $\mathcal{B}$ sends the following parameters to the adversary $\mathcal{A}_1$: $\big<q, \mathbb{G}_1, \mathbb{G}_T, \hat{e}, H_1, H_2, Y, \{h_1^k\}_{\forall k\in \Phi}, \{\{\mathtt{PK_{r^k_i}}\}_{\forall r^k_i\in \Psi_k}\}_{\forall k\in \Phi},\\ \Big\{\Big\{\mathtt{PKey^{r^k_w}_{r^k_i}}\Big\}_{\forall r^k_w\in \mathbb{R}_{r^k_i}\setminus \{r^k_i\}}\Big\}_{\forall r^k_i\in \Psi_k, \forall k\in \Phi}\big>$. Simulator $\mathcal{B}$ also sends $\{\mathtt{Priv^k_c}, \mathtt{Pub^{1k}_c}, \mathtt{Pub^{2k}_c}, \mathtt{Priv^{k}_{ID_u}}, \mathtt{Pub^k_{ID_u}}\}_{\forall k\in \Phi}$ and $\mathtt{Priv_{ID_u}}$ to the adversary $\mathcal{A}$. Note that simulator $\mathcal{B}$ sends a random number $\mathtt{s_{ID^*_u}}$ as $\mathtt{Priv_{ID_u}}$ to the adversary $\mathcal{A}$. As the simulator $\mathcal{B}$ chooses $\mathtt{s_{ID^*_u}}$ in the \textsc{Setup} and sends it to the adversary $\mathcal{A}$, the simulated game remains the same as the original scheme.
		
		\par
		\textbf{\textsc{Phase 1}} Adversary sends a challenged role set $\mathbb{S}^*$ to the simulator $\mathcal{B}$ for role-keys. Simulator $\mathcal{B}$ computes $\{\mathtt{RK^{1, u}_{r^k_x}}, \mathtt{RK^{2, u}_{r^k_x}}\}_{\forall r^k_x\in \mathbb{S}^*}$ as follows:
		\par 
		For all $r^k_x\in \mathbb{S}^*$, simulator $\mathcal{B}$ computes 
		\begin{align}
		\mathtt{RK^{1, u}_{r^k_x}}=& 	g^{\frac{a\cdot b\cdot \mathtt{Priv_{ID_u}}+ b\cdot \vartheta_k}{b\prod_{\forall r^k_j\in \mathbb{R}_{r^k_x}}\alpha_{r^k_j}}}= A^{\frac{H_2\big(A^{h_{id^*_u}}\big)}{\prod_{\forall r^k_j\in \mathbb{R}_{r^k_x}}\alpha_{r^k_j}}}\cdot g^{\frac{\vartheta_k}{\prod_{\forall r^k_j\in \mathbb{R}_{r^k_x}}\alpha_{r^k_j}}}\\
		\mathtt{RK^{2, u}_{r^k_x}}=& g^{\frac{a\cdot b\cdot \mathtt{Priv_{ID_u}}+ b\cdot \vartheta_k}{b\cdot \alpha_{r^k_x}}}= A^{\frac{H_2\big(A^{h_{id^*_u}}\big)}{\alpha_{r^k_x}}}\cdot g^{\frac{\vartheta_k}{\alpha_{r^k_x}}}
		\end{align}
		Finally, simulator $\mathcal{B}$ sends $\{\mathtt{RK^{1, u}_{r^k_x}}, \mathtt{RK^{2, u}_{r^k_x}}\}_{\forall r^k_x\in \mathbb{S}^*}$ to the adversary $\mathcal{A}_1$. Note that distribution of the role-keys for $\mathbb{S}^*$ is identical to the original scheme. 
		\par 
		\textbf{\textsc{Challenge}} When adversary $\mathcal{A}_1$ decides  that \textbf{\textsc{Phase 1}} is over, it submits two equal length messages $\mathtt{K_0}$ and $\mathtt{K_1}$ to the simulator $\mathcal{B}$. Simulator $\mathcal{B}$ flips a random binary coin $\omega$ and encrypts $\mathtt{K_\omega}$ with the challenged role set $\Gamma^*$.
		\par 
		Simulator $\mathcal{B}$ first computes $h_{w}= H_1(w)$ and chooses five polynomials $q_1(x), q_2(x), q_3(x), q_4(x)$ and $q_5(x)$ of degree $2, |\Gamma^*_{\Phi}|, |\Gamma^*_{\Phi}|, |\Gamma^*|$ and $|\Gamma^*|$ respectively, where $\Gamma^*_{\Phi}$ represents the set of system authorities associated with $\Gamma^*$, as follows:
		
		\begin{itemize}
			\item $q_1(x)$: Simulator $\mathcal{B}$ implicitly sets $q_1(0)= c$ and randomly chooses the rest of the points to define the polynomial $q_1(x)$ completely. Note that $q_1(1)$ and $q_1(2)$ values implicitly represent $d_i$ and $d_j$ of our original scheme respectively.
			\item $q_2(x)$: Simulator $\mathcal{B}$ sets $q_2(0)= q_1(1)$ and randomly chooses the rest of the  points to define $q_2(x)$ completely.
			\item $q_3(x)$: Simulator $\mathcal{B}$ sets $q_3(0)= q_1(2)$ and randomly chooses the rest of the points to defined $q_3(x)$ completely.
			\item $q_4(x)$: Simulator $\mathcal{B}$ sets $q_4(0)= q_1(1)$ and randomly chooses the rest of the points to define $q_4(x)$ completely.
			\item $q_5(x)$: Simulator $\mathcal{B}$ sets $q_5(0)= q_1(2)$ and randomly chooses the rest of the points to define $q_5(x)$ completely.
		\end{itemize}
		\par 
		Now, simulator $\mathcal{B}$ computes a challenged ciphertext $\mathbb{CT}_\omega=\big<C_1, C_2, C_3, \{C_{4k}, C'_{4k}\}_{\forall k\in \Gamma^*_{\Phi}}, \{C_{r^k_i}, C'_{r^k_i}\}_{\forall r^k_i\in \Gamma^*}\big>$, where
		\begin{align}
		C_1=& \mathtt{K_\omega}\cdot Z\\
		C_2=& g^{b\cdot \zeta_k\cdot q_1(2)}= B^{\zeta_k\cdot q_1(2)}\\
		C_3=& g^{b\cdot \varrho_k\cdot \mathtt{Priv^k_c}\cdot q_1(2)}= B^{\varrho_k\cdot H_2\big(A^{\frac{h_{id^*_c}}{\varrho_k}}\big)\cdot q_1(2)}\\
		C_{4k}=& g^{b\cdot \vartheta_k\cdot \mathtt{Priv^k_c}\cdot q_2(i)}\\
		=& B^{\vartheta_k\cdot H_2\big(A^{\frac{h_{id^*_c}}{\varrho_k}}\big)\cdot q_2(i)}, \text{ $1\le i\le |\Gamma^*_{\Phi}|$}\\
		C'_{4k}=& g^{b\cdot \vartheta_k\cdot \mathtt{Priv^k_c}\cdot q_3(i)}\\
		=& B^{\vartheta_k\cdot H_2\big(A^{\frac{h_{id^*_c}}{\varrho_k}}\big)\cdot q_3(i)}, \text{ $1\le i\le |\Gamma^*_{\Phi}|$}\\
		C_{r^k_i}=& g^{h_{w}\cdot b\cdot q_4(i)\prod_{\forall r^k_j\in \mathbb{R}_{r^k_i}}\alpha_{r^k_j}}, \text{ $1\le i\le |\Gamma^*|$}\\
		=& B^{h_{w}\cdot q_4(i)\prod_{\forall r^k_j\in \mathbb{R}_{r^k_i}}\alpha_{r^k_j}}\\
		C'_{r^k_i}=& g^{h_{w}\cdot b\cdot q_5(i)\prod_{\forall r^k_j\in \mathbb{R}_{r^k_i}}\alpha_{r^k_j}}, \text{ $1\le i\le |\Gamma^*|$}\\
		=& B^{h_{w}\cdot q_5(i)\prod_{\forall r^k_j\in \mathbb{R}_{r^k_i}}\alpha_{r^k_j}}
		\end{align}
		\par
		
		Note that $c$ (implicitly) can be recovered using the Lagrange's polynomial interpolation from the values $q_1(1)$ and $q_1(2)$, and $q_1(1)$, $q_1(2)$ can be recovered from the polynomials $q_4(x)$ and $q_5(x)$  if and only if the entity (i.e., adversary $\mathcal{A}_1$) possesses a qualified set of roles. Hence, the distribution of the ciphertext $\mathbb{CT}_\omega$ for $\Gamma^*$ is identical to the original scheme.
		\par 
		\textbf{\textsc{Phase 2}} Same as \textbf{\textsc{Phase 1}}
		
		\par 
		\textbf{\textsc{Guess}} The adversary $\mathcal{A}_1$ guesses a bit $\omega'$ which is sent to simulator $\mathcal{B}$. If $\omega'=\omega$ then the adversary $\mathcal{A}_1$ wins CPA game; otherwise it fails. If $\omega'= \omega$, simulator $\mathcal{B}$ answers ``DBDH'' in the game (i.e. outputs $l= 0$); otherwise $\mathcal{B}$ answers ``random'' (i.e. outputs $l= 1$).
		\par 
		If $Z= \hat{e}(g, g)^{z}$; then $C_{1}$ is completely random from the view of the adversary $\mathcal{A}_1$. So, the received ciphertext $\mathbb{CT}_\omega$ is not compliant to the game (i.e. invalid ciphertext). Therefore, the adversary $\mathcal{A}_1$ chooses $\omega'$ randomly. Hence, the probability of the adversary $\mathcal{A}_1$ for outputting $\omega'= \omega$ is $\frac{1}{2}$.
		\par 
		
		If $Z= \hat{e}(g, g)^{abc}$, then adversary $\mathcal{A}_1$ receives a valid ciphertext. The adversary $\mathcal{A}_1$ wins the CPA game with non-negligible advantage $\epsilon$ (according to Theorem \ref{theo1}). As such, the probability of outputting $\omega'= \omega$ for the adversary $\mathcal{A}_1$ is $\frac{1}{2}+ \epsilon$, where probability $\epsilon$ is for guessing that the received ciphertext is valid and probability $\frac{1}{2}$ is for guessing whether the valid encrypted message $C_{1}$ is related to $\mathtt{K_0}$ or $\mathtt{K_1}$.
		\par 
		Therefore, the overall advantage $Adv^{IND-CPA}_{\mathcal{A}_1}$ of the simulator $\mathcal{B}$ is $\frac{1}{2}(\frac{1}{2}+ \epsilon+ \frac{1}{2})- \frac{1}{2}= \frac{\epsilon}{2}$.
	\end{proof}
	\subsubsection{Security against Chosen Keyword Attack}
	\label{cka}
	Chosen keyword attack (CKA) security of the proposed scheme can be defined by the following theorem and proof. 
	\begin{theorem}\label{theo2}
		If a PPT adversary $\mathcal{A}_2$ wins the CKA security game defined in Section \ref{CKA} with a non-negligible advantage $\epsilon$, then a PPT simulator $\mathcal{B}$ can be constructed to break DBDH assumption with non-negligible advantage $\frac{\epsilon}{2}$.
	\end{theorem}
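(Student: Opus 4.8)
The plan is to reuse the reduction of Theorem~\ref{theo1} almost wholesale, since the IND-CKA game shares the same \textsc{Setup} embedding and differs only in (i) what the \textsc{Challenge} hides (a keyword rather than a message), (ii) the type of oracle queries in \textsc{Phase 1}/\textsc{Phase 2} (trapdoors rather than role-keys), and (iii) the fact that $\mathcal{A}_2$ receives only the cloud's private keys and the user's public key, not the user secret $\mathtt{Priv_{ID_u}}$. First I would have the DBDH challenger hand $\mathcal{B}$ the tuple $\langle A=g^a, B=g^b, C=g^c, Z\rangle$ and run exactly the same \textsc{Setup}: $Y=\hat{e}(A,B)$, $h^k_1=B^{\zeta_k}$, $\mathtt{PK_{r^k_i}}=B^{\prod_{\forall r^k_j\in \mathbb{R}_{r^k_i}}\alpha_{r^k_j}}$, the proxy re-encryption keys as the corresponding products of $\alpha_{r^k_j}$, and the cloud/user key material as before, implicitly fixing $\mathtt{Priv_{ID_u}}=\mathtt{s_{ID^*_u}}$. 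The one change is that $\mathcal{B}$ now withholds $\mathtt{Priv_{ID_u}}$ and all role-keys, forwarding to $\mathcal{A}_2$ only the public parameters, the role public keys, the proxy re-encryption keys, the cloud private keys $\{\mathtt{Priv^k_c}\}$ and the user public key $\mathtt{Pub^k_{ID_u}}$; this matches the real IND-CKA game in which the curious cloud, not a legitimate user, is the attacker.

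The new ingredient is answering the trapdoor queries. For a queried pair $(\mathbb{S}^*,w)$ I would have $\mathcal{B}$ first reconstruct the role-keys $\mathtt{RK^{1,u}_{r^k_x}},\mathtt{RK^{2,u}_{r^k_x}}$ for each $r^k_x\in \mathbb{S}^*$ using the same $A$-based closed forms as in \textsc{Phase 1} of Theorem~\ref{theo1}, then draw a fresh $v\in \mathbb{Z}_q^*$ and a timestamp and run \textsc{TrapGen} honestly: $tr_1,\dots,tr_4$ follow from the internally held $\mathtt{s_{ID^*_u}}$ and $\mathtt{Priv^{k}_{ID_u}}$, while $tr^1_{r^k_x},tr^2_{r^k_x}$ are the role-keys raised to $v/H_1(w)$. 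Since every underlying quantity is already distributed as in the real scheme and $v$ is fresh per query, the simulated trapdoors are perfectly distributed. The Selective-ID restriction that some $r^k_x\in \mathbb{S}^*$ satisfies $r^k_x\notin \mathbb{R}_{r^k_i}$ for a role $r^k_i\in \Gamma^*$ guarantees each queried trapdoor is non-qualifying, so it cannot be paired against the challenge in the manner of \textsc{KeySearch}/\textsc{PartialDec} to recover the embedded interpolation point.

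In the \textsc{Challenge}, $\mathcal{B}$ flips $\omega$, sets $h_w=H_1(w_\omega)$, and assembles $\mathbb{CT}_\omega$ exactly as in Theorem~\ref{theo1}: it reuses the five polynomials $q_1,\dots,q_5$ with $q_1(0)=c$ implicit, injects $h_w$ into the role-indexed components $C_{r^k_i}=B^{h_w\, q_4(i)\prod \alpha_{r^k_j}}$ and $C'_{r^k_i}=B^{h_w\, q_5(i)\prod \alpha_{r^k_j}}$, and ties $C_1$ to $Z$. By the same Lagrange-interpolation argument, the point carrying $c$ (and hence the value that would let one strip the keyword out of the role-indexed components) is recoverable only by a qualified role set, which $\mathcal{A}_2$ provably lacks. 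Consequently $\mathbb{CT}_\omega$ is a correctly distributed encryption of $w_\omega$ when $Z=\hat{e}(g,g)^{abc}$ and is independent of $\omega$ from $\mathcal{A}_2$'s view when $Z=\hat{e}(g,g)^z$. The \textsc{Guess} bookkeeping is then verbatim: $\mathcal{B}$ answers ``DBDH'' iff $\mathcal{A}_2$ guesses $\omega$ correctly, giving advantage $\tfrac12(\tfrac12+\epsilon+\tfrac12)-\tfrac12=\tfrac{\epsilon}{2}$.

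I expect the crux to be the interaction between trapdoor simulation and keyword hiding rather than either one alone: I must argue that $\mathcal{B}$ can honour a trapdoor query whose keyword equals $w_0$ or $w_1$, and that such a same-keyword yet non-qualifying trapdoor still yields no information about $\omega$. This is exactly where the commitment to $\Gamma^*$ in the Selective-ID model and the non-ancestor condition on every queried $\mathbb{S}^*$ are indispensable, since together they ensure the $1/H_1(w)$ factor inside $tr^1_{r^k_x},tr^2_{r^k_x}$ can never be matched against the $H_1(w_\omega)$ factor of the challenge to expose the interpolation point that guards $c$.
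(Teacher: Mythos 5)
Your proposal is correct and follows essentially the same reduction as the paper: the identical DBDH embedding in \textsc{Setup}, trapdoor queries answered by (equivalently) raising the simulated role-keys to $v/H_1(w)$, the same five-polynomial challenge ciphertext with $C_1=\mathtt{K}\cdot Z$ for a fresh random $\mathtt{K}\in\mathbb{G}_T$ and $h_{w_\omega}$ injected into the role-indexed components, and the same $\frac{\epsilon}{2}$ accounting in \textsc{Guess}. The only deviations are cosmetic: the paper actually hands $\mathtt{Priv_{ID_u}}=\mathtt{s_{ID^*_u}}$ (alongside the cloud private keys) to $\mathcal{A}_2$ rather than withholding it, and it writes the trapdoor components directly in closed form instead of invoking \textsc{TrapGen} on reconstructed role-keys, but both choices leave the simulated distributions unchanged.
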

	\begin{proof}
		In this proof, we show that a simulator $\mathcal{B}$ can be constructed to help an adversary $\mathcal{A}_2$ to gain advantage $\frac{\epsilon}{2}$ against our proposed scheme.
		\par 
		The DBDH challenger $\mathcal{C}$ chooses random numbers $(a, b, c, z)\in \mathbb{Z}_q^*$ and flips a binary random coin $l$. It sets $Z= \hat{e}\left(g, g\right)^{abc}$ if $l= 0$ and $Z= \hat{e}\left(g, g\right)^z$ otherwise. Afterwards, challenger $\mathcal{C}$ sends $A= g^a, B= g^b, C= g^c$ and $Z$ to the simulator $\mathcal{B}$, and it asks the simulator $\mathcal{B}$ to output $l$. Now simulator $\mathcal{B}$ acts as a challenger in the rest of the security game.
		\par 
		In the following game simulator $\mathcal{B}$ interacts with the adversary $\mathcal{A}_2$ as follows:
		\par
		\textbf{\textsc{Init}} Adversary $\mathcal{A}_2$ sends a challenged role set $\Gamma^*$ and two identities $(\mathtt{ID^*_c}, \mathtt{ID_u^*})$ to the simulator $\mathcal{B}$.
		\par 
		\textbf{\textsc{Setup}} Simulator $\mathcal{B}$ chooses random numbers $\{\zeta_k, \vartheta_k, \varrho_k\}_{\forall k\in \Phi}$. It also chooses random numbers $\{\alpha_{r^k_i}\}_{\forall i\in \Psi_k, \forall k\in \Phi}\in \mathbb{Z}_q^*$. Simulator $\mathcal{B}$ computes $Y= \hat{e}\left(g, g\right)^{ab}= \hat{e}\left(A, B\right), \{h_1^k= g^{b\cdot \zeta_k}= B^{\zeta_k}\}_{\forall k\in \Phi}$. It also computes $\mathtt{PK_{r^k_i}}= g^{b\prod_{\forall r^k_j\in \mathbb{R}_{r^k_i}}\alpha_{r^k_j}}= B^{\prod_{\forall r^k_j\in \mathbb{R}_{r^k_i}}\alpha_{r^k_j}}$ for all $r^k_i\in \Psi_k$, where $1\le k\le m$. Moreover, simulator $\mathcal{B}$ computes $\left\{\left\{\mathtt{PKey^{r^k_w}_{r^k_i}}= \prod_{\forall r^k_j\in \mathbb{R}_{r^k_i}\setminus \{r^k_w\}}\alpha_{r^k_j}\right\}_{\forall r^k_w\in \mathbb{R}_{r^k_i}\setminus \{r^k_i\}}\right\}_{\forall r^k_i\in \Psi_k}$ where $1\le k\le m$.
		\par 
		Moreover, simulator $\mathcal{B}$ chooses a random number $\mathtt{s_{ID^*_u}}\in \mathbb{Z}_q^*$ and computes $h_{id^*_c}= H_1(\mathtt{ID^*_c})$. It then computes $\{\mathtt{Priv^k_c}, \mathtt{Pub^{1k}_c}, \mathtt{Pub^{2k}_c}, \mathtt{Pub^k_{ID_u}}\}_{\forall k\in \Phi}$, where
		\begin{align}
		\mathtt{Priv^k_c}=& H_2\big(g^{\frac{a\cdot b\cdot h_{id^*_c}}{b\cdot \varrho_k}}\big)= H_2\big(A^{\frac{h_{id^*_c}}{\varrho_k}}\big)\\
		\mathtt{Pub^{1k}_c}=& g^{b\cdot \vartheta_k\cdot \mathtt{Priv^k_c}}= B^{\vartheta_k\cdot H_2\big(A^{\frac{h_{id^*_c}}{\varrho_k}}\big)}\\
		\mathtt{Pub^{2k}_c}=& g^{b\cdot \varrho_k\cdot \mathtt{Priv^k_c}}= B^{\varrho_k\cdot H_2\big(A^{\frac{h_{id^*_c}}{\varrho_k}}\big)}\\
		\mathtt{Pub^k_{ID_u}}=& g^{\frac{H_2\left(g^{\frac{a\cdot b\cdot \mathtt{s_{ID^*_u}}+ b\cdot \varrho_k}{b\cdot \zeta_k}}\right)}{\mathtt{s_{ID^*_u}}}}= g^{\frac{H_2\left(A^{\frac{\mathtt{s_{ID^*_u}}}{\zeta_k}}\cdot g^{\frac{\varrho_k}{\zeta_k}}\right)}{\mathtt{s_{ID^*_u}}}}
		\end{align}
		\par 
		Simulator $\mathcal{B}$ sends the following parameters to the adversary $\mathcal{A}_2$: $\big<q, \mathbb{G}_1, \mathbb{G}_T, \hat{e}, H_1, H_2, Y, \{h_1^k\}_{\forall k\in \Phi}, \{\{\mathtt{PK_{r^k_i}}\}_{\forall r^k_i\in \Psi_k}\}_{\forall k\in \Phi},\\ \Big\{\Big\{\mathtt{PKey^{r^k_w}_{r^k_i}}\Big\}_{\forall r^k_w\in \mathbb{R}_{r^k_i}\setminus \{r^k_i\}}\Big\}_{\forall r^k_i\in \Psi_k, \forall k\in \Phi}, \{\mathtt{Pub^{1k}_c}, \mathtt{Pub^{2k}_c}, \\\mathtt{Pub^k_{ID_u}}\}_{\forall k\in \Phi}\big>$. Simulator $\mathcal{B}$ also sends private keys $\{\mathtt{Priv^k_c}\}_{\forall k\in \Phi}$ and $\mathtt{Priv_{ID_u}}= \mathtt{s_{ID^*_u}}$ to the adversary $\mathcal{A}_2$.
		\par
		\textbf{\textsc{Phase 1}} Adversary $\mathcal{A}_2$ sends a set of roles $\mathbb{S}^*$ and a keyword $w$ to the simulator $\mathcal{B}$ for the trapdoor. Simulator $\mathcal{B}$ chooses random numbers $(\mathbbm{v}, \mathbbm{ts})\in \mathbb{Z}_q^*$. It computes $h_w= H_1(w)$. Simulator $\mathcal{B}$ computes the trapdoor $\mathtt{Trap}= \big<tr_1, tr_2, tr_3, tr_4, \{tr^1_{r^k_x}, tr^2_{r^k_x}\}_{\forall r^k_x\in \mathbb{S}^*}\big>$, where
		\begin{align}
		tr_1=& \frac{\mathtt{s_{ID^*_u}}+ \mathbbm{ts}}{H_2\left(g^{\frac{a\cdot b\cdot \mathtt{s_{ID^*_u}}+ b\cdot \varrho_k}{b\cdot\zeta_k}}\right)}\cdot \mathbbm{v}= \frac{\left[\mathtt{s_{ID^*_u}}+ \mathbbm{ts}\right]\mathbbm{v}}{H_2\left(A^{\frac{\mathtt{s_{ID^*_u}}}{\zeta_k}}\cdot g^{\frac{\varrho_k}{\zeta_k}}\right)} \\
		tr_2=& \left(g^{\frac{a\cdot b\cdot \mathtt{s_{ID^*_u}}+ b\cdot \varrho_k}{b\cdot\zeta_k}}\right)^{\mathbbm{v}}= A^{\frac{\mathtt{s_{ID^*_u}\cdot \mathbbm{v}}}{\zeta_k}}\cdot g^{\frac{\varrho_k\cdot \mathbbm{v}}{\zeta_k}}  \\
		tr_3=& g^{\frac{\mathbbm{v}}{\mathtt{s_{ID^*_u}}}}= g^{\frac{\mathbbm{v}}{\mathtt{s_{ID^*_u}}}}  \\
		tr_4=& g^{\mathbbm{v}}  
		\end{align}
		For all $r^k_x\in \mathbb{S}^*$,
		\begin{align}
		tr^1_{r^k_x}=& 
		\left(g^{\frac{a\cdot b\cdot \mathtt{s_{ID^*_u}}+ b\cdot \vartheta_k}{b\prod_{\forall r^k_j\in \mathbb{R}_{r^k_x}}\alpha_{r^k_j}}}\right)^{\frac{\mathbbm{v}}{h_w}}\\
		=& A^{\frac{\mathtt{s_{ID^*_u}}\cdot \mathbbm{v}}{h_w\prod_{\forall r^k_j\in \mathbb{R}_{r^k_x}}\alpha_{r^k_j}}}\cdot g^{\frac{\vartheta_k\cdot \mathbbm{v}}{h_w\prod_{\forall r^k_j\in \mathbb{R}_{r^k_x}}\alpha_{r^k_j}}}\\
		tr^2_{r^k_x}=& \left(g^{\frac{a\cdot b\cdot \mathtt{s_{ID^*_u}}+ b\cdot \vartheta_k}{b\cdot \alpha_{r^k_x}}}\right)^{\frac{\mathbbm{v}}{h_w}}= A^{\frac{\mathtt{s_{ID^*_u}}\cdot \mathbbm{v}}{h_w\cdot \alpha_{r^k_x}}}\cdot g^{\frac{\vartheta_k\cdot \mathbbm{v}}{h_w\cdot \alpha_{r^k_x}}}
		\end{align}
		\par 
		Finally, simulator $\mathcal{B}$ sends trapdoor $\mathtt{Trap}$ to the adversary $\mathcal{A}_2$.
		\par  
		\textbf{\textsc{Challenge}} When adversary $\mathcal{A}_2$ decides  that \textbf{\textsc{Phase 1}} is over, it submits two equal length keywords ${w_0}$ and ${w_1}$ to the simulator $\mathcal{B}$. Simulator $\mathcal{B}$ flips a random binary coin $\omega$ and encrypts $w_\omega$ with the challenged role set $\Gamma^*$.
		\par 
		Simulator $\mathcal{B}$ first computes $h_{w_\omega}= H_1(w_\omega)$. It then chooses a random element $\mathtt{K}\in \mathbb{G}_T$ and five polynomials $q_1(x), q_2(x), q_3(x), q_4(x)$ and $q_5(x)$ of degree $2, |\Gamma^*_{\Phi}|, |\Gamma^*_{\Phi}|, |\Gamma^*|$ and $|\Gamma^*|$ respectively as follows:
		
		\begin{itemize}
			\item $q_1(x)$: Simulator $\mathcal{B}$ implicitly sets $q_1(0)= c$ and randomly chooses the rest of the points to define the polynomial $q_1(x)$ completely. Note that $q_1(1)$ and $q_1(2)$ implicitly represent $d_i$ and $d_j$ of our original scheme respectively.
			\item $q_2(x)$: Simulator $\mathcal{B}$ sets $q_2(0)= q_1(1)$ and randomly chooses the rest of the  points to define $q_2(x)$ completely.
			\item $q_3(x)$: Simulator $\mathcal{B}$ sets $q_3(0)= q_1(2)$ and randomly chooses the rest of the points to defined $q_3(x)$ completely.
			\item $q_4(x)$: Simulator $\mathcal{B}$ sets $q_4(0)= q_1(1)$ and randomly chooses the rest of the points to define $q_4(x)$ completely.
			\item $q_5(x)$: Simulator $\mathcal{B}$ sets $q_5(0)= q_1(2)$ and randomly chooses the rest of the points to define $q_5(x)$ completely.
		\end{itemize}
		\par 
		Now, simulator $\mathcal{B}$ computes a challenged ciphertext $\mathbb{CT}_\omega=\big<C_1, C_2, C_3, \{C_{4k}, C'_{4k}\}_{\forall k\in \Gamma^*_{\Phi}}, \{C_{r^k_i}, C'_{r^k_i}\}_{\forall r^k_i\in \Gamma^*}\big>$, where
		\begin{align}
		C_1=& \mathtt{K}\cdot Z\\
		C_2=& g^{b\cdot \zeta_k\cdot q_1(2)}= B^{\zeta_k\cdot q_1(2)}\\
		C_3=& g^{b\cdot \varrho_k\cdot H_2\big(g^{\frac{a\cdot b\cdot h_{id^*_c}}{b\cdot \varrho_k}}\big)\cdot q_1(2)}= B^{\varrho_k\cdot H_2\big(A^{\frac{h_{id^*_c}}{\varrho_k}}\big)\cdot q_1(2)}\\
		C_{4k}=& \Big\{g^{b\cdot \vartheta_k\cdot H_2\big(g^{\frac{a\cdot b\cdot h_{id^*_c}}{b\cdot \varrho_k}}\big)\cdot q_2(i)}\\
		=& B^{\vartheta_k\cdot H_2\big(A^{\frac{h_{id^*_c}}{\varrho_k}}\big)\cdot q_2(i)}\Big\}, \text{ $1\le i\le |\Gamma^*_{\Phi}|$}\\
		C'_{4k}=& \Big\{g^{b\cdot \vartheta_k\cdot H_2\big(g^{\frac{a\cdot b\cdot h_{id^*_c}}{b\cdot \varrho_k}}\big)\cdot q_3(i)}\\
		=& B^{\vartheta_k\cdot H_2\big(A^{\frac{h_{id^*_c}}{\varrho_k}}\big)\cdot q_3(i)}\Big\}, \text{ $1\le i\le |\Gamma^*_{\Phi}|$}\\
		C_{r^k_i}=& \Big\{g^{h_{w_\omega}\cdot b\cdot q_4(i)\prod_{\forall r^k_j\in \mathbb{R}_{r^k_i}}\alpha_{r^k_j}}\\
		=& B^{h_{w_\omega}\cdot q_4(i)\prod_{\forall r^k_j\in \mathbb{R}_{r^k_i}}\alpha_{r^k_j}}\Big\}, \text{ $1\le i\le |\Gamma^*|$}\\
		C'_{r^k_i}=& \Big\{g^{h_{w_\omega}\cdot b\cdot q_5(i)\prod_{\forall r^k_j\in \mathbb{R}_{r^k_i}}\alpha_{r^k_j}}\\
		=& B^{h_{w_\omega}\cdot q_5(i)\prod_{\forall r^k_j\in \mathbb{R}_{r^k_i}}\alpha_{r^k_j}}\Big\}, \text{ $1\le i\le |\Gamma^*|$}
		\end{align}
		\par
		Similar with CPA proof \ref{cpa_proof}, the distribution of the ciphertext $\mathbb{CT}_\omega$ for $\Gamma^*$ is identical to the original scheme.  
		\par 
		\textbf{\textsc{Phase 2}} Same as \textbf{\textsc{Phase 1}}
		
		\par 
		\textbf{\textsc{Guess}} The adversary $\mathcal{A}_2$ guesses a bit $\omega'$ and sends to the simulator $\mathcal{B}$. If $\omega'=\omega$ then the adversary $\mathcal{A}_2$ wins CPA game; otherwise it fails. If $\omega'= \omega$, simulator $\mathcal{B}$ answers ``DBDH'' in the game (i.e. outputs $l= 0$); otherwise $\mathcal{B}$ answers ``random'' (i.e. outputs $l= 1$).
		\par 
		If $Z= \hat{e}(g, g)^{z}$; then $C_{1}$ is completely random from the view of the adversary $\mathcal{A}_2$. So, the received ciphertext $\mathbb{CT}_\omega$ is not compliant to the game (i.e. invalid ciphertext). Therefore, the adversary $\mathcal{A}_2$ chooses $\omega'$ randomly. Hence, the probability of the adversary $\mathcal{A}_2$ for outputting $\omega'= \omega$ is $\frac{1}{2}$.
		\par 
		
		If $Z= \hat{e}(g, g)^{abc}$, then adversary $\mathcal{A}_2$ receives a valid ciphertext. The adversary $\mathcal{A}_2$ wins the CPA game with non-negligible advantage $\epsilon$ (according to the Theorem \ref{theo2}). As such, the probability of outputting $\omega'= \omega$ for the adversary $\mathcal{A}_2$ is $\frac{1}{2}+ \epsilon$, where probability $\epsilon$ is for guessing that the received ciphertext is valid and probability $\frac{1}{2}$ is for guessing whether the valid encrypted message $C_{1}$ is related to ${w_0}$ or ${w_1}$.
		\par 
		Therefore, the overall advantage $Adv^{IND-CKA}_{\mathcal{A}_2}$ of the simulator $\mathcal{B}$ is $\frac{1}{2}(\frac{1}{2}+ \epsilon+ \frac{1}{2})- \frac{1}{2}= \frac{\epsilon}{2}$.
	\end{proof}
	\begin{table}[!t]
		\tabcolsep 2.0pt
		\centering
		\caption{NOTATIONS}
		\begin{tabular}{p{1.5cm}p{7cm}}
			\hline
			Notation  & Description
			\\[0.5ex]    \hline
			$|\Gamma|$ & Total number of roles associated with a ciphertext\\
			$|\Gamma_\Phi|$ & Total number of SAs associated with $\Gamma$ (i.e., ciphertext)\\
			$|\mathbb{S}_{\mathtt{ID_u}}|$ & Total number of roles associated with a trapdoor\\
			$n_c$ & Total number of ciphertext associated with a revoked role \\
			$n_u$ & Total number users associated with a revoked role \\
			$n_s$ & Total number SA associated with a user
		\end{tabular}
		\label{notation2}
	\end{table}
	\begin{table*}[t]
		\centering
		\caption{Functionality Comparison}
		\label{functionality}
		\begin{tabular}{|c|c|c|c|c|c|c|c|}
			\hline
			\multicolumn{1}{|l|}{}  & \begin{tabular}[c]{@{}c@{}}Authorized Keyword\\ Search\end{tabular}&Authentication & Replay Attack & \begin{tabular}[c]{@{}c@{}}Conjunctive Keyword\\ Search\end{tabular} & Revocation & Decryption & Technique\\ \hline
			\cite{Sun2016}  & \checkmark& \ding{55} & \ding{55} & \checkmark & \checkmark & \ding{55} & ABE\\ \hline
			{\cite{Hu2017}}&\checkmark&\ding{55}  & \ding{55} & \ding{55} &  \ding{55}&  \ding{55}& ABE\\ \hline
			\cite{Miao2017} & \checkmark&\ding{55} & \ding{55} &  \checkmark&  \checkmark& \checkmark & ABE\\ \hline
			\cite{Chaudhari2019} &\checkmark&\ding{55}&\ding{55}&\ding{55}&\ding{55}& \ding{55}& ABE\\ \hline
			Proposed scheme& \checkmark&\checkmark &\checkmark  &\checkmark  &\checkmark  & \checkmark &  RBE\\ \hline
		\end{tabular}
	\end{table*}
	\begin{table}[!t]
		\centering
		\caption{Evaluation of the Computation Overhead }
		\label{performance}
		\begin{tabular}{|c|l|l|}
			\hline
			\multicolumn{2}{|c|}{Operations} &    \multicolumn{1}{c|}{Computation Complexity} \\ \hline
			\multicolumn{2}{|c|}{Data Encryption} &   $(4+ |\Gamma|+ |\Gamma_\Phi|)Exp_{\mathbb{G}_1}+ Exp_{\mathbb{G}_T}$   \\ \hline
			\multicolumn{2}{|c|}{Trapdoor Generation} &  $(3+ 2|\mathbb{S}_{\mathtt{ID_u}}|)Exp_{\mathbb{G}_1}$   \\ \hline
			\multirow{3}{*}{Data Search} & Authentication &  $(2+ |\Gamma_{\Phi}|)Exp_{\mathbb{G}_1}+ 3T_p$   \\ \cline{2-3} 
			& KeySearch &  $<(|\Gamma|+ 1)Exp_{\mathbb{G}_1}+ (2+ |\Gamma|)T_p$  \\ \cline{2-3} 
			& PartialDec &  $<(|\Gamma|+ |\Gamma_{\Phi}|)Exp_{\mathbb{G}_1}+ (1+ |\Gamma|)T_p$  \\ \hline
			\multicolumn{2}{|c|}{Decryption} & $Exp_{\mathbb{G}_T}$     \\ \hline
			\multicolumn{2}{|c|}{Revocation} &   $<(1+ 2n_c+ 2n_u)Exp_{\mathbb{G}_1}$  \\ \hline
		\end{tabular}
	\end{table}
	\begin{table}[!t]
		\centering
		\caption{Evaluation of the Storage and Communication Overhead}
		\label{performance2}
		\begin{tabular}{|c|l|l|}
			\hline
			\multicolumn{2}{|c|}{Items} &    \multicolumn{1}{c|}{Overhead} \\ \hline
			\multicolumn{2}{|c|}{Ciphertext} & $(4+ 2|\Gamma|)|\mathbb{G}_1|+ |\mathbb{G}_T|$    \\ \hline
			\multicolumn{2}{|c|}{Secret key} &   $(1+ n_s)|\mathbb{Z}_q^*|+ 2|\mathbb{S}_{\mathtt{ID_u}}||\mathbb{G}_1|$  \\ \hline
			\multicolumn{2}{|c|}{Trapdoor} &  $|\mathbb{Z}_q^*|+ (3+ 2|\Gamma|)|\mathbb{G}_1|$   \\ \hline
		\end{tabular}
	\end{table}
	\begin{table}[!t]
		\caption{Computation Time (in Milliseconds) of Elementary Cryptographic Operations}
		\begin{tabular}{|p{.5cm}|l|l|p{.7cm}|l|c|p{.5cm}|}
			\hline
			\multirow{2}{*}{} & \multicolumn{2}{p{1.5cm}|}{Exponentiation} & \multirow{2}{*}{Pairing} & \multicolumn{2}{p{2.4cm}|}{Group multiplication} & \multirow{2}{*}{Hash} \\ \cline{2-3} \cline{5-6} 
			& \multicolumn{1}{c|}{$\mathbb{G}_1$} & \multicolumn{1}{c|}{$\mathbb{G}_T$} &  & \multicolumn{1}{c|}{$\mathbb{G}_1$} & \multicolumn{1}{c|}{$\mathbb{G}_T$} &  \\ \hline
			\multicolumn{1}{|c|}{\begin{tabular}[c]{@{}c@{}}Commodity \\ Laptop\end{tabular}} & $2.062$ & $0.126$ & $1.292$ & $0.008$ & $0.002$ & $0.003$ \\ \hline
			\multicolumn{1}{|c|}{Workstation} & $1.153$ & $0.091$ & $0.645$ & $0.005$ & $0.001$ & $0.002$ \\ \hline
		\end{tabular}
		\label{time_comp}
	\end{table}
	\begin{figure}[t]
		\centering
		\scalebox{3}{\includegraphics[width=2.5cm, height=2.2cm]{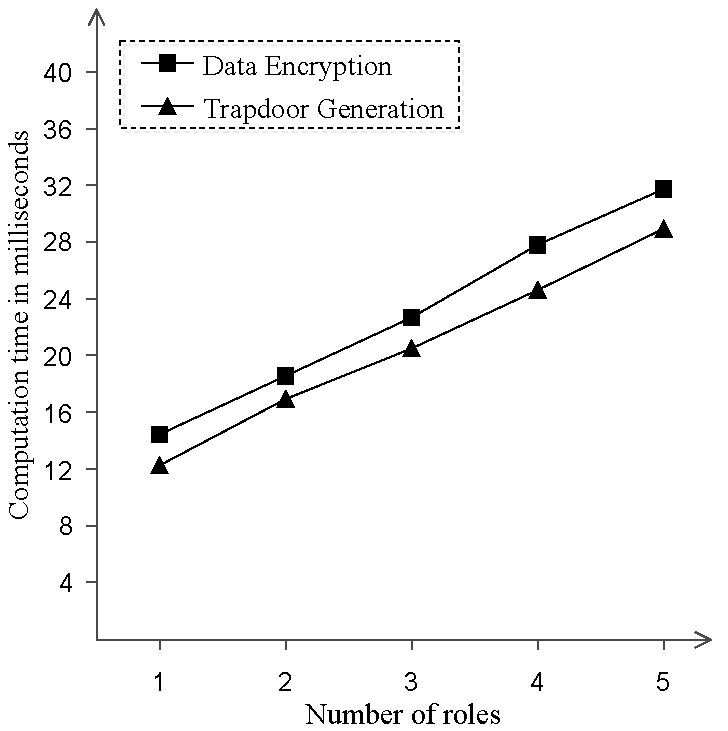}} 
		\caption{Computation Time of \emph{Data Encryption} and \emph{Trapdoor Generation} Phases}
		\label{Encryption_TrapGen}
	\end{figure}
	\begin{figure}[!t]
		\centering
		\scalebox{3}{\includegraphics[width=2.5cm, height=2.2cm]{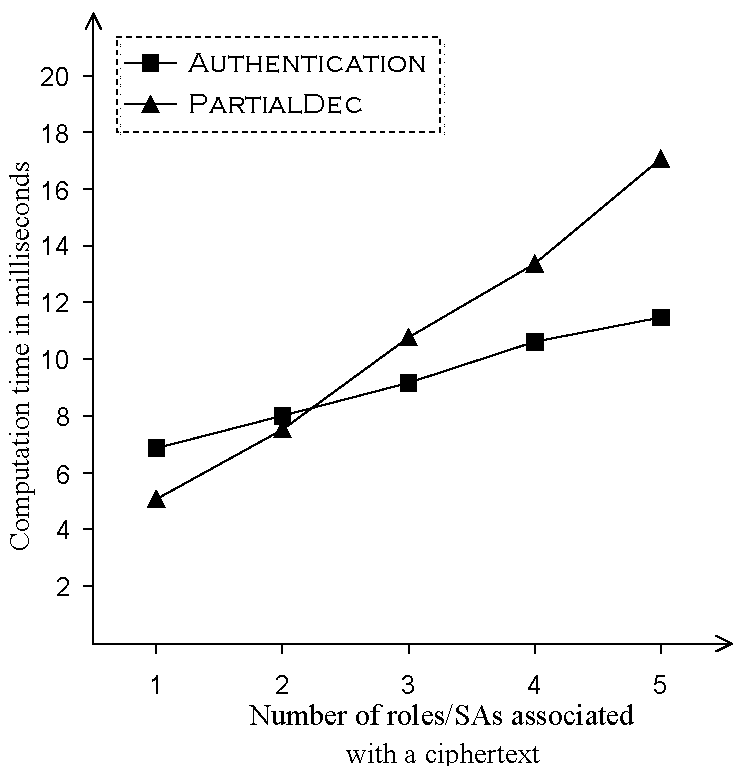}} 
		\caption{Computation Time of \textsc{Authentication} and \textsc{PartialDec} Algorithms}
		\label{Auth_PartialDec}
	\end{figure}
	\begin{figure}[!t]
		\centering
		\scalebox{3}{\includegraphics[width=2.5cm, height=2.2cm]{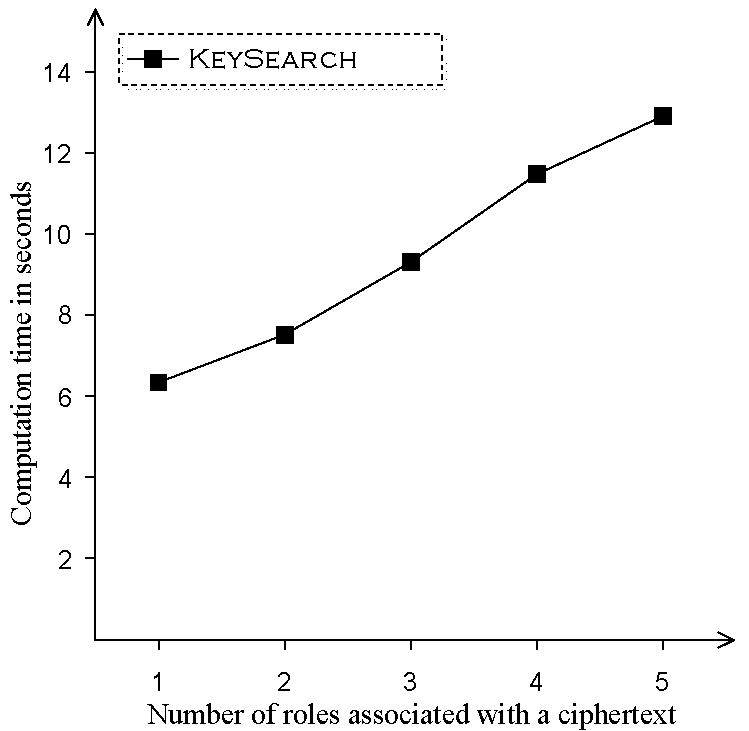}} 
		\caption{Computation Time of \textsc{KeySearch} Algorithm for 1000 Ciphertexts}
		\label{KeySeach}
	\end{figure}
	
	\subsection{Performance Analysis}
	\label{performance_analysis}
	This section evaluates functionality, computation, storage and communication overhead of our proposed scheme. The computational overhead is shown in terms of number of pairing ($T_p$) and group exponentiation operations ($Exp_{\mathbb{G}_1}$ and $Exp_{\mathbb{G}_T}$). We do not consider the other cryptographic operations such as hash and group element multiplication operations, as these operations take much less computation time compared with the pairing and group exponentiation operations (details can be seen in the Table \ref{time_comp}). The storage and communication overheads are shown in terms of group element size $|\mathbb{Z}_q^*|, |\mathbb{G}_1|$ and $|\mathbb{G}_T|$. We use PBC library \cite{pbc} which runs over GMP library \cite{gmp} for the implementation purpose. Type A elliptic curve of $160$-bit group order embedding degree $2$ is used for the implementation. The chosen curve provides an equivalent of $1024$-bit discrete log security. The elementary cryptographic operations that are performed by the owners and users are implemented using a commodity laptop Computer with Ubuntu 17.10 (64-bit) operating system and having 2.4GHz Core i3 processor with 4GB memory. The elementary cryptographic operations that are performed by public cloud is implemented using a workstation with Ubuntu 17.10 (64-bit) operating system and having 3.5 GHz Intel(R) Xeon(R) CPU E5-2637 v4 processor with 16 GB memory. Table \ref{time_comp} shows the time required to perform each cryptographic operations. During the implementation, we consider that the number of SAs associated with a RBAC access policy is equal to the number of roles associated with a ciphertext, i.e., $|\Gamma_\phi|= |\Gamma|$. It is to be noted that all the implementation results are the mean of 50 trials. The notations used in the rest of this paper are shown in Table \ref{notation2}.

	\par
	Table \ref{functionality} shows the functionality comparison of some notable ABE based keyword search schemes \cite{Sun2016, Hu2017, Miao2017, Chaudhari2019} with our proposed scheme. From the Table \ref{functionality}, it can be observed that all the ABE based schemes \cite{Sun2016, Hu2017, Miao2017, Chaudhari2019} including our proposed scheme provide authorized keyword search functionality, as the owner can embed access policies of his/her choice on the encrypted data itself. However, unlike our proposed scheme, none of the schemes in \cite{Sun2016, Hu2017, Miao2017, Chaudhari2019} address the user authentication problem, which allows the public cloud to authenticate the user before performing computationally expensive keyword search operations. As such, \cite{Sun2016, Hu2017, Miao2017, Chaudhari2019} rely on some existing authentication mechanisms. Also, unlike  \cite{Sun2016, Hu2017, Miao2017, Chaudhari2019}, our proposed scheme can prevent the replay attacks even if the trapdoors are exposed to the adversaries. In \cite{Sun2016, Hu2017, Miao2017, Chaudhari2019}, if an adversary gains access to a valid trapdoor, the adversary can re-use the trapdoor using a fresh random number. Further, our proposed scheme and \cite{Sun2016, Miao2017} support conjunctive keyword search and user revocation, while \cite{Hu2017, Chaudhari2019} do not support. Moreover, our proposed scheme and \cite{Miao2017} support both the keyword search and decryption functionalities; while  \cite{Sun2016, Hu2017, Chaudhari2019} support only the keyword search functionality. Furthermore, \cite{Sun2016, Hu2017, Miao2017, Chaudhari2019} are designed using ABE technique; while our proposed scheme is designed using RBE technique, which enables it to support the role hierarchy property. Thus, it makes our proposed scheme more suitable for the real world organizations/enterprises. Therefore, it can be observed that our proposed scheme supports more functionalities compared with the other notable works \cite{Sun2016, Hu2017, Miao2017, Chaudhari2019}. 
	\par 
	Table \ref{performance} shows the computation overhead of our proposed scheme\footnote{We do not consider \cite{Sun2016, Hu2017, Miao2017, Chaudhari2019} for further comparison, as they are based on ABE; whereas our proposed scheme is based on RBE.}. The computation cost is shown in asymptotic upper bound in the worst cases. In Table \ref{performance}, we consider the most frequently operated phases, e.g., \emph{Data Encryption}, \emph{Trapdoor Generation}, \emph{Data Search}, \emph{Decryption}, and \emph{Revocation}. 
	\paragraph{Data Encryption}
	Owner encrypts the plaintext data and the associated keywords in the \emph{Data Encryption} phase, which requires $(4+ 2|\Gamma|)$ group exponentiation operations on $\mathbb{G}_1$ and one exponentiation operation on $\mathbb{G}_T$. It can be observed that the encryption cost mainly depends on the number of roles associated with a ciphertext (i.e., associated with the chosen RBAC access policy). This can also be seen from the Figure \ref{Encryption_TrapGen}. It can be observed that approximately $31$ milliseconds are required to generate a ciphertext associated with $5$ roles and $5$ SAs. It is to be noted that, the encryption operation is performed by the owner only once for a particular data.

	\paragraph{Trapdoor Generation} 
	A user needs to perform $(3+ 2|\mathbb{S}_{\mathtt{ID_u}}|)$ group exponentiation operations on $\mathbb{G}_1$ to compute a trapdoor. It can be observed that the cost for the generation of a trapdoor depends on the number of roles associated with the user (i.e., associated with the trapdoor). Figure \ref{Encryption_TrapGen}, shows the experimental results of the \emph{Trapdoor Generation} phase, which demonstrates that our proposed scheme incurs less computation overhead on the user side. It takes approximately $29$ milliseconds to generate a trapdoor having $5$ roles.

	\paragraph{Data Search}
	In the \emph{Data Search} phase, the public cloud first authenticates the user which requires $2+ |\Gamma_{\Phi}|$ group exponentiation operations on $\mathbb{G}_1$ and three pairing operations. It can be observed that the cost of the user authentication operation (i.e., \textsc{Authentication} algorithm) depends on the number of SAs associated with the RBAC access policy of a ciphertext. Figure \ref{Auth_PartialDec} shows the computation time of \textsc{Authentication} algorithm with respect to the number of SAs. It is to be noted that the \textsc{Authentication} algorithm is performed only once per user request. After successful authentication of the user, the public cloud computes at most $|\Gamma|+ 1$ group exponentiation operations on $\mathbb{G}_1$, and $2+ |\Gamma|$ pairing operations to complete the \textsc{KeySearch} algorithm for the keyword search. It can be observed that the cost of the \textsc{KeySearch} algorithm depends on the number of roles associated with the ciphertext, which can also be seen from the Figure \ref{KeySeach}. Finally, the public cloud computes at most $|\Gamma|+ |\Gamma_{\Phi}|$ group exponentiation operations and $1+ |\Gamma|$ pairing operations to compute the \textsc{PartialDec} algorithm. It can be observed that the cost to perform the \textsc{PartialDec} algorithm depends on the number of roles and the number of SAs associated with the RBAC access policy. The computation time of \textsc{PartialDec} algorithm is shown in the Figure \ref{Auth_PartialDec}. It is to be noted that the \textsc{PartialDec} algorithm is performed for each ciphertext received from the \textsc{KeySearch} algorithm.
	

	
	\paragraph{Decryption}
	As most of the computationally expensive cryptographic operations are outsourced to the public cloud, a user requires only one group exponentiation operation on $\mathbb{G}_T$ to decrypt a ciphertext. It is to be noted that the time required to perform one group exponentiation operation on $\mathbb{G}_T$ is $0.126$ milliseconds in a commodity laptop Computer. Hence, the decryption cost in our proposed scheme is considerably less. Thus, our proposed scheme is also suitable for an environment such as IoT, where the end-users have limited computing resources. 
	
	\paragraph{Revocation}
	The complete user revocation operation takes a minimal overhead in the system, as the SA can revoke the user simply by revoking (or removing) his/her public key (from the public bulletin board). On the other hand, the SA requires at most $1+ 2n_c+ 2n_u$ group exponentiation operations on $\mathbb{G}_1$ to revoke a role from a user. As the SA needs to re-encrypt all the ciphertexts and update role-keys of all the users related with the revoked roles, the cost of the role-level revocation depends mainly on the number of ciphertext and users associated with the revoked roles. 

	\subsubsection{Storage and Communication Overhead Comparison}
	Table \ref{performance2} shows the storage and communication overhead of our proposed scheme. For the evaluation purpose, the ciphertext size, size of the secret keys possessed by a user, and the trapdoor size are considered. From Table \ref{performance2}, it can be observed that the ciphertext size mainly depends on the number of roles associated with the ciphertext. For each role $r^k_i$, the owner computes two ciphertext components $C_{r^k_i}$ and $C'_{r^k_i}$. Hence, the ciphertext size linearly increases with the roles associated with a ciphertext. 
	\par 
	A user keeps a private key $\mathtt{Priv^{k}_{ID_u}}$ for each organization, and the user also keeps a common private key $\mathtt{Priv_{ID_u}}$ for all the organizations. Moreover, the user keeps two role-keys for each role he/she possessed. Thus, the size of the secret key possessed by a user mainly depends on the number of SAs (i.e., number of organizations) and the number of roles associated with that user. Similarly, trapdoor size linearly increases with the roles associated with the trapdoor. The user computes two trapdoor components $tr^1_{r^k_x}$ and $tr^2_{r^k_x}$ for each role $r^k_x$ associated with the trapdoor.
	
	\section{Conclusion}
	\label{conclusion}
	This paper has proposed a novel authorized keyword search mechanism with efficient decryption using the RBE technique for a cloud environment, where multiple organizations can outsource their sensitive data. The proposed scheme enables the owners to define and enforce RBAC access policies on the encrypted data, thereby avoiding reducing the dependency on the service provider. It also enables the public cloud to authenticate the users first before performing computationally expensive search operations, which reduces overhead on the system. In addition, the proposed scheme helps to prevent replay attacks. Conjunctive keyword search is supported without introducing any significant overhead into the system. Further, the complete and role-level user revocation mechanisms are supported for revoking access privileges of the users in both organization level and role level respectively. Moreover, an outsourced decryption mechanism is introduced in the proposed scheme to reduce decryption processing cost at the end-user side, which makes it suitable for resource constrained environment. Furthermore, we have formally proved that the proposed scheme provides provable security against Chosen Plaintext and Chosen Keyword Attacks. Our performance analysis shows that the proposed scheme is suitable for real-world applications in terms of computation, communication and storage overhead.
	\par 
	This paper has introduced a new direction in designing a searchable encryption mechanism using the RBE technique. Further works include improving the efficiency of role-level revocation of RBE based keyword search schemes as well as for dynamic addition (removal) of roles into (from) a role hierarchy.  
	\ifCLASSOPTIONcompsoc
	\section*{Acknowledgements}
	\else
	\section*{Acknowledgement}
	
	\fi

	This paper is supported in part by European Union's Horizon 2020 research and innovation programme under the grant agreement No $830892$, project SPARTA.

	\ifCLASSOPTIONcaptionsoff
	\newpage
	\fi
	\bibliographystyle{unsrt}

\end{document}